\def\BibTeX{{\rm B\kern-.05em{\sc i\kern-.025em b}\kern-.08emT\kern-.1667em\lower.7ex\hbox{E}\kern-.125emX}}
\newcommand{\kc}{$k$-core\xspace}
\newcommand{\ks}{$k$-shell\xspace}
\title{\Large Towards User Engagement Dynamics in Social Networks}
\author{ 
    Qingyuan Linghu \\
    University of New South Wales \\
    \texttt{q.linghu@unsw.edu.au}
	\And
	Fan Zhang \\
	Guangzhou University \\
	\texttt{fanzhang.cs@gmail.com} \\
	\And
	Xuemin Lin \\
	University of New South Wales \\
	\texttt{lxue@cse.unsw.edu.au} \\
	\And
	Wenjie Zhang \\
	University of New South Wales \\
	\texttt{zhangw@cse.unsw.edu.au} \\
	\And
	Ying Zhang \\
	University of Technology Sydney \\
	\texttt{ying.zhang@uts.edu.au} \\
}
\begin{document}

\newtheorem{example}{Example}[section]
\newtheorem{definition}{Definition}[section]
\newtheorem{theorem}{Theorem}[section]
\newtheorem{lemma}{Lemma}[section]
\newtheorem{proof}{Proof}[section]

\maketitle

\begin{abstract}
The engagement of each user in a social network is an essential indicator for maintaining a sustainable service. Existing studies use the \textit{coreness} of a user to well estimate its static engagement in a network. However, when the engagement of a user is weakened or strengthened, the influence on other users' engagement is unclear. Besides, the dynamic of user engagement has not been well captured for evolving social networks. In this paper, we systematically study the network dynamic against the engagement change of each user for the first time. The influence of a user is monitored via two novel concepts: the \textit{collapsed power} to measure the effect of user weakening, and the \textit{anchored power} to measure the effect of user strengthening. We show that the two concepts can be naturally integrated such that a unified offline algorithm is proposed to compute both the collapsed and anchored followers for each user. When the network structure evolves, online techniques are designed to maintain the users' followers, which is faster than redoing the offline algorithm by around 3 orders of magnitude. Extensive experiments on real-life data demonstrate the effectiveness of our model and the efficiency of our algorithms.
\end{abstract}

\keywords{Social Network \and User Engagement \and Graph Algorithm}

\vspace{2mm}
\section{Introduction}

\begin{figure}[t]
\centering
\subfigure{
\includegraphics[width=0.55\columnwidth]{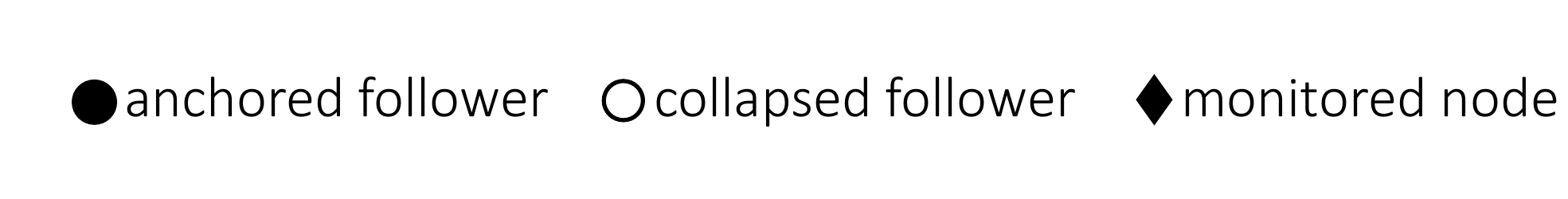}
}\vspace{-2mm}
\subfigure{
\includegraphics[width=0.7\columnwidth]{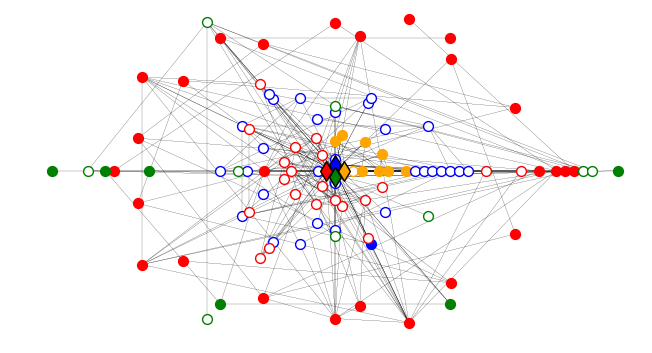}
}
\centering
\caption{Node Monitoring on \texttt{Gowalla}}
\label{fig:introduction}
\end{figure}

In the study of social networks, it is essential to monitor the engagement of users (nodes), and then motivate or protect the critical nodes~\cite{malliaros2013stay,SigmodLinghu}, so as to maintain a sustainable service or to defend the attacks from the competitors.
For instance, Friendster was a once popular social network with over 115 million users, but it is suspended due to contagious leave of users with low engagement~\cite{garcia2013social}~\cite{seki2017mechanism}.

Social networks are often modeled as graphs when studied. In graph theory, the $k$-core is defined as the maximal subgraph where each vertex has at least $k$ neighbors (degree) in the subgraph~\cite{matula1983smallest}~\cite{seidman1983network}. Given
a graph, the $k$-core decomposition iteratively removes every vertex with degree less than $k$. 
Every vertex in the graph has a unique \textit{coreness} value, that is, the largest $k$ s.t.
the $k$-core contains the vertex. Existing works have well studied the effect of coreness on capturing node engagement. In \cite{malliaros2013stay}, the coreness of a node is demonstrated as the "best practice" for its engagement. In \cite{SigmodLinghu}, the engagement of a node (e.g., the check-in number of a node) is further validated as positive correlated with its coreness based on real-life experiments.

The weakening and strengthening of nodes are the two natural engagement dynamics in a network. Specifically, in \cite{collapsed_kcore} (resp. \cite{anchored_kcore}), when weakening (resp. strengthening) a node, we say it is \textit{collapsed} (resp. \textit{anchored}) such that its degree is regarded as 0 (resp. $+\infty$). 
When a node $x$ is collapsed (resp. anchored), all the other nodes which decrease (resp. increase) the coreness values due to collapsing (resp. anchoring) $x$ are called $x$'s \textit{collapsed followers} (resp. \textit{anchored followers}). As far as we know, against the weakening and strengthening of a node, the engagement dynamic change of other nodes has not been systematically studied. Thus, in this paper, we compute the collapsed and anchored followers of each node to monitor its influence on the engagement dynamics of other nodes.

\begin{example}
\label{ex:introduction}
In the social network \texttt{Gowalla}~\cite{snapnets}, we monitor 4 nodes, the \textit{red}, \textit{orange}, \textit{blue} and \textit{green} diamonds at the center of Figure~\ref{fig:introduction}. The collapsed followers (resp. anchored followers) of a monitored node are visualized by the hollow circles (resp. solid circles) with the same color. All the followers having the same coreness are presented at the same distance from the center. We can find the following phenomena: 1) Different nodes have very different number of followers, e.g., the \textit{orange} node has the least followers compared with the other 3 nodes; 2) The ratio of followers of the two categories varies from different monitored nodes, e.g.,  the vast majority followers of the \textit{blue} (resp. \textit{orange}) node are collapsed followers (resp. anchored followers), but the \textit{green} and \textit{red} nodes have relatively even number of collapsed and anchored followers; 3) The coreness of the followers are diverse, e.g., the followers are presented in many different distances from the center even though we only monitor 4 nodes.
\end{example}

From Example~\ref{ex:introduction}, we can find a node's influence  on the engagement dynamics of other nodes is well captured by two aspects: the collapsed power and the anchored power. In this paper, we study the integration and the efficient computation towards anchored and collapsed followers.
As real-life networks may fast evolve, we also aim to efficiently maintain the correct collapsed and anchored followers for each node against edge insertions and deletions, which helps us systematically monitor the node influence on the engagement dynamics of social networks in real-life situation. 

\vspace{2mm}
\section{Related Work}
The stability and engagement dynamic in social networks has attracted significant attention, e.g., \cite{malliaros2013stay}~\cite{anchored_kcore}~\cite{olak}~\cite{chwe2000communication}~\cite{SigmodLinghu}~\cite{dgac}~\cite{collapsed_kcore}~\cite{WWWJZhang}. These works all adopt the $k$-core model because of its degeneration property which can naturally quantify the engagement dynamics of nodes in real-life networks~\cite{matula1983smallest}. The anchored $k$-core~\cite{anchored_kcore} and anchored coreness~\cite{SigmodLinghu} problem aim to find and enhance some key nodes for network stability. Besides, the collasped $k$-core~\cite{collapsed_kcore} and collapsed coreness~\cite{WWWJZhang} problem hold a different view in which some key nodes are protected against the attacks. However, unlike this paper, none of the above works combines collapsing and anchoring as the two natural engagement changes, i.e., the weakening and strengthening, and systematically study the engagement dynamics of other nodes against each node's engagement change in evolving networks.

The concept of $k$-core and its computing algorithm are first introduced in~\cite{matula1983smallest}~\cite{seidman1983network}. An $\mathcal{O}(m)$ in-memory algorithm for core decomposition is provided in~\cite{batagelj2003m}. The state-of-the-art core maintenance algorithm is proposed in~\cite{KOrder}. The core decomposition algorithms under different computation environments are studied including I/O efficient algorithm~\cite{wen2016efficient} and distributed algorithm~\cite{montresor2012distributed}. The $k$-core model has lots of applications including, community discovery~\cite{dourisboure2009extraction}~\cite{fang2017effective}~\cite{li2018skyline}, influential spreader identification~\cite{kitsak2010identification}~\cite{lin2014identifying}~\cite{malliaros2016locating}~\cite{ugander2012structural}, and the applications in biology and ecology~\cite{bader2003automated}~\cite{bola2015dynamic}~\cite{morone2019k}.

Another classic problem that investigates the node influence in networks is \textit{influence maximization} (\texttt{IM}), which is studied in recent years in~\cite{chen2015online}~\cite{li2018influence}~\cite{galhotra2016holistic}~\cite{DBLP:journals/pvldb/GoyalBL11}~\cite{goyal2011celf++}~\cite{lei2015online}~\cite{li2015real}~\cite{DBLP:journals/pvldb/LuCL15}~\cite{ohsaka2014fast}~\cite{DBLP:journals/pvldb/WangFLT17}~\cite{guo2020influence}~\cite{qiu2018deepinf}, to name some. The \texttt{IM} problem is first presented in \cite{DBLP:conf/kdd/KempeKT03} proving the problem is NP-hard. However, the \texttt{IM} problem aims to find $k$ nodes to maximize the social influence instead of monitoring each node's influence regarding network structural stability as in this paper.

\vspace{2mm}
\section{Preliminaries}

\begin{table}[t]
\small
  \centering
  \caption{Common Notations throughout the paper}
    \label{tb:notations}
    \begin{tabular}{|l|l|}
      \hline
      \textbf{Notation}   & \textbf{Definition}                   \\ \hline \hline

      $G$ & an unweighted and undirected graph  \\ \hline

      $V(G)$; $E(G)$ & the vertex set of $G$; the edge set of $G$  \\ \hline

      $n; m$ & $|V(G)|$; $|E(G)|$ (assume $m > n$) \\ \hline

      $u$, $v$, $w$, $x$ & a vertex in $G$ \\ \hline

      $N(u, G)$ & the set of neighbors of $u$ in $G$  \\ \hline
      
      $deg(u, G)$ & $+\infty$ if $u$ is anchored, 0 if $u$ is collapsed, $|N(u, G)|$ otherwise \\ \hline

      $C_{k}(G)$ & the $k$-core of $G$ \\ \hline

      $c(u, G)$ & the original coreness of $u$ in $G$ \\ \hline
      
      $c^-_x(u, G)$ & the coreness of $u$ in $G$ with collapsing $x$ \\ \hline
      
      $c^+_x(u, G)$ & the coreness of $u$ in $G$ with anchoring $x$ \\ \hline
      
      $^-\mathcal{F}(x, G)$ & the collapsed follower set of $x$ in $G$ \\ \hline
      
      $^+\mathcal{F}(x, G)$ & the anchored follower set of $x$ in $G$ \\ \hline
      
      $H_k(G)$ & the $k$-shell of $G$ \\ \hline
      
      $\mathcal{SC}[v]$ & the only shell component containing $v$ \\ \hline
      
      $S$, $S.V$, $S.E$, $S.c$ & a shell component with its vertex set, edge set and coreness value \\ \hline
      
      $\mathcal{C}[S]$ & the collapser candidate set of $S$\\ \hline
      
      $\mathcal{A}[S]$ & the anchor candidate set of $S$\\ \hline
      
      $^-F[x][S]$ & the collapsed follower set of $x$ in $S$ \\ \hline
      
      $^+F[x][S]$ & the anchored follower set of $x$ in $S$ \\ \hline
     
    \end{tabular}
\end{table}

\label{sec:preli}
We consider an unweighted and undirected graph $G = (V, E)$, where $V(G)$ (resp. $E(G)$) represents the set of vertices (resp. edges) in $G$.
$N(u, G)$ is the set of adjacent vertices of $u$ in $G$, which is also called the neighbour vertex set of $u$ in $G$.
Note that we may omit the input graph for all the notations in the paper when the context is clear, e.g., using $deg(u)$ instead of $deg(u, G)$.

\begin{definition}
\label{def:kcore}
\textbf{$k$-core}.
Given a graph $G$, a subgraph $S$ is the $k$-core of $G$, denoted by $C_{k}(G)$, if ($i$) $S$ satisfies degree constraint, i.e., $deg(u, S) \geq k$ for each $u \in V(S)$;
and ($ii$) $S$ is maximal, i.e., any supergraph $S' \supset S$ is not a $k$-core.
\end{definition}

If $k \ge k'$, the \kc is always a subgraph of $k'$-core, i.e., $C_k(G)\subseteq C_{k'}(G)$. Each vertex in $G$ has a unique coreness.

\begin{definition}
\label{def:corenumber}
\textbf{coreness}.
Given a graph $G$, the coreness of a vertex $u\in V(G)$, denoted by $c(u, G)$, is the largest $k$ such that $C_k(G)$ contains $u$, i.e., $c(u, G) = max\{k~:~u\in C_k(G)\}$.
\end{definition}

\begin{definition}
\label{def:coredecomposition}
\textbf{core decomposition}.
Given a graph $G$, core decomposition of $G$ is to compute the coreness of every vertex in $V(G)$.
\end{definition}

In this paper, once a vertex $x$ in the graph $G$ is \textbf{collapsed} (resp. \textbf{anchored}), the degree of $x$ is regarded as zero (resp. positive infinity), i,e., $deg(x, G) = 0$ (resp. $deg(x, G) = +\infty$).
Every collapsed (resp. anchored) vertex is called an \textbf{collapser} or \textbf{collapser vertex} (resp. \textbf{anchor} or \textbf{anchor vertex}).
The existence of collapser vertices (resp. anchor vertices) may change the coreness of other vertices. We use $c_x^-(u, G)$ (resp. $c_x^+(u, G)$) to denote the coreness of $u$ in $G$ with collapsing (resp. anchoring) $x$.

\begin{algorithm}[t]
\SetVline
\SetFuncSty{textsf}
\SetArgSty{textsf}
\small
\caption{\bf CoreDecomp($G$)}
\label{alg:coredecomp}
\Input{ a graph $G$}
\Output{ $c(u, G)$ for each $u \in V(G)$ }
\State{$k \gets 1$}
\While{exist vertices in $G$}
{
  \While{$\exists u \in V(G)$ with $deg(u) < k$}
  {
    \State{$deg(v) \gets deg(v) - 1$ \textbf{for} each $v \in N(u, G)$}
    \State{remove $u$ and its adjacent edges from $G$}
    \State{$c(u, G) \gets k - 1$}
  }
  \State{$k \gets k + 1$}
}
\Return{$c(u, G)$ for each $u \in V(G)$} 
\end{algorithm}

The computation of core decomposition is shown in Algorithm~\ref{alg:coredecomp}, in which we recursively delete the vertex with the smallest degree in the graph $G$. The time complexity is $\mathcal{O}(m)$.
When there exists a collapser vertex $x$, $x$ is automatically deleted at first as its degree is zero. 
When there exists an anchor vertex $x$, the only difference is that we do not delete $x$ as its degree is positive infinity.

\begin{definition}
\label{def:col_follower}
\textbf{collapsed follower set}.
Given a graph $G$ and the collapser vertex $x$, the collapsed follower set of $x$ in $G$ is denoted by $^-\mathcal{F}(x, G)$, and includes all other vertices decreasing their coreness with collapsing $x$ in $G$, i.e., $^-\mathcal{F}(x, G) = \{u\in V(G):~u \neq x \wedge c^-_x(u, G) < c(u, G)\}$.
\end{definition}

\begin{definition}
\label{def:anc_follower}
\textbf{anchored follower set}.
Given a graph $G$ and the anchor vertex $x$, the anchored follower set of $x$ in $G$ is denoted by $^+\mathcal{F}(x, G)$, and includes all other vertices increasing their coreness with anchoring $x$ in $G$, i.e., $^+\mathcal{F}(x, G) = \{u\in V(G):~u \neq x \wedge c^+_x(u, G) > c(u, G)\}$.
\end{definition}

\noindent \textbf{Problem Statement.}
Given a graph $G$, for each $v \in V(G)$, we compute $^-\mathcal{F}(v, G)$ and $^+\mathcal{F}(v, G)$. When an edge $(u, u')$ is inserted into (resp. removed from) $G$, we have $E(G) \gets E(G) \cup \{(u, u')\}$ (resp. $E(G) \gets E(G) \setminus \{(u, u')\}$). Now for each $v \in V(G)$, we maintain $^-\mathcal{F}(v, G)$ and $^+\mathcal{F}(v, G)$. 

Note that, vertex insertion (resp. removal) can be regarded as a sequence of edges insertion (resp. removal).

\vspace{2mm}
\newcommand{\survived}{\textit{survived}\xspace}
\newcommand{\unexplored}{\textit{unexplored}\xspace}
\newcommand{\discarded}{\textit{discarded}\xspace}

\section{The Offline and Online Algorithms}
\label{sec:algorithms}
We introduce our algorithms of offline computing the collapsed follower set (\texttt{CFS}) and anchored follower set (\texttt{AFS}) for each vertex in a graph, and online maintaining the two sets each time when an edge is inserted to or removed from the graph. The computation is based on \textit{shell component} (\ref{subsec:ksc}), by which the graph is partitioned into multiple atom units. Any parallel architecture such as multi-threads can then be utilized to compute the \texttt{CFS} and \texttt{AFS} in parallel. When an edge is inserted or removed, \ref{subsec:maintenance} shows how we maintain the atom units. We prove that, by only conducting computation regarding those new updated atom units, the \texttt{CFS} and \texttt{AFS} for each vertex can be efficiently maintained. The further accelerating techniques for computing \texttt{CFS} and \texttt{AFS} are introduced in \ref{subsec:followers}.

\subsection{The Shell Component Based Framework}
\label{subsec:ksc}

\vspace{1mm}
\begin{definition}
\label{def:k_shell}
\textbf{$k$-shell}.
Given a graph $G$, the $k$-shell, denoted by $H_{k}(G)$, is the set of vertices in $G$ with coreness equal to $k$, i.e., $H_{k}(G) = V(C_k(G)) \setminus V(C_{k+1}(G))$.
\end{definition}

\begin{figure}[htp]
\centerline{\includegraphics[width=0.4\columnwidth]{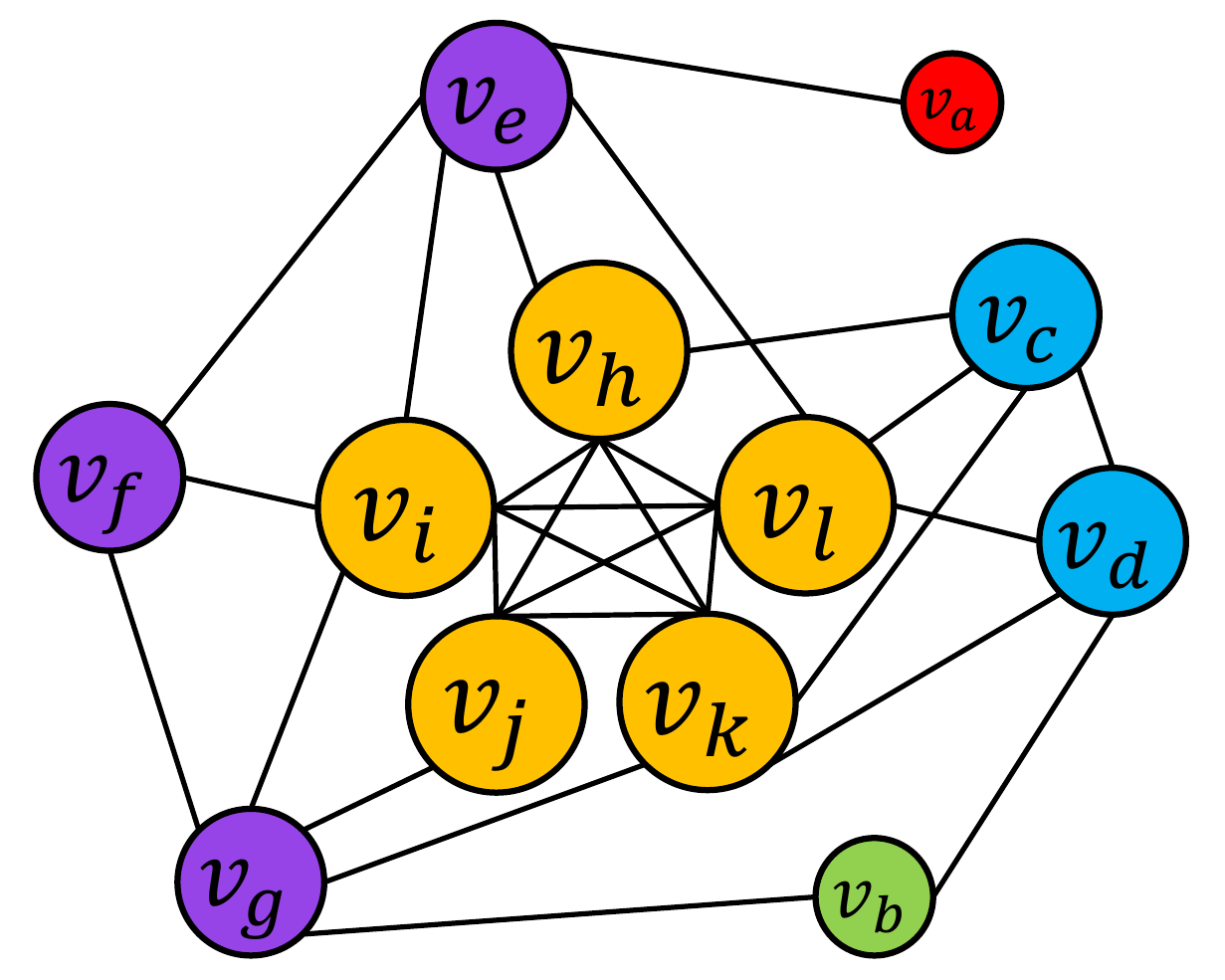}}
\caption{Illustration of Shell Component.}
\label{fig:shell_cp}
\end{figure}

\begin{definition}
\label{def:shellcomponent}
\textbf{shell component}.
Given a graph $G$ and the \ks $H_k(G)$, a subgraph $S$ is a shell component of $H_k(G)$, if $S$ is a maximal induced connected component of $H_k(G)$.
\end{definition}

\begin{theorem}
\label{theo:only_sc}
For each vertex $v \in V(G)$, there exists only one shell component $S$ having $v \in V(S)$.
\end{theorem}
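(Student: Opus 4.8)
The plan is to split ``exactly one'' into an existence claim and a uniqueness claim, and to reduce both to two facts already available to us: the uniqueness of coreness (stated immediately after Definition~\ref{def:corenumber}) and the elementary fact that the connected components of a graph partition its vertex set. The guiding idea is that the coreness value $k = c(v, G)$ singles out a \emph{single} \ks $H_k(G)$ that can possibly contain $v$, and that within that one \ks the shell components are just its connected components, which are pairwise vertex-disjoint.

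For existence, I would set $k = c(v, G)$ and invoke Definition~\ref{def:k_shell} directly: since $v$ has coreness $k$, we have $v \in H_k(G)$. Viewing $H_k(G)$ as the subgraph of $G$ induced by this vertex set and decomposing it into its connected components, $v$ must lie in one such maximal connected component $S$. By Definition~\ref{def:shellcomponent}, $S$ is a shell component and $v \in V(S)$, establishing existence.

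For uniqueness, suppose $S_1$ and $S_2$ are shell components with $v \in V(S_1) \cap V(S_2)$. The first step is to argue that both must be shell components of the \emph{same} \ks: a shell component of $H_{k'}(G)$ is an induced subgraph of $H_{k'}(G)$, so all its vertices have coreness exactly $k'$; since $v$ has the unique coreness value $k = c(v,G)$, we conclude $S_1$ and $S_2$ are both components of $H_k(G)$. The second step is routine: two distinct connected components of the induced subgraph on $H_k(G)$ are vertex-disjoint, yet $v$ lies in both, forcing $S_1 = S_2$. Combining existence and uniqueness yields the theorem.

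There is no deep obstacle here; the only point requiring care is to rule out that $v$ might appear in shell components belonging to \emph{different} \kss. This is precisely where I rely on the uniqueness of coreness to confine $v$ to the single \ks $H_{c(v,G)}(G)$; once that is secured, the result follows from the standard partition property of connected components. Thus the ``hard part,'' such as it is, amounts to invoking uniqueness of coreness correctly rather than any intricate argument.
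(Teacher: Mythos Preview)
Your proposal is correct and follows essentially the same approach as the paper: both arguments reduce uniqueness to the observation that all vertices of a shell component share the coreness $c(v,G)$ and are connected to $v$ within $H_{c(v,G)}(G)$, so two putative shell components through $v$ must coincide (the paper phrases this as ``$S_1\cup S_2$ is a single shell component,'' you phrase it as ``distinct connected components are vertex-disjoint''). Your treatment is in fact slightly more complete, since you also spell out existence and explicitly rule out the cross-shell case, both of which the paper's proof leaves implicit.
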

\begin{proof}
We prove it by contradiction. Assuming there are $S_1$ and $S_2$ with $v \in V(S_1)$ and $v \in V(S_2)$. Then we have, for each $u \in V(S_1)$, $c(u) = c(v)$ and $u$ is connected with $v$; For each $u \in V(S_2)$, $c(u) = c(v)$ and $u$ is connected with $v$. Thus, $S_1 \cup S_2$ satisfies the definition of one single shell component, which contradicts with our assumption. Proof completes.
\end{proof}

\begin{example}
\label{ex:ks}
In Figure~\ref{fig:shell_cp}, we have $H_1(G) = \{v_a\}$, $H_2(G) = \{v_b\}$, $H_3(G) = \{v_c, v_d, v_e, v_f, v_g\}$ and $H_4(G) = \{v_h, v_i, v_j, v_k, v_l\}$. For $H_1(G)$, $S_1$ is the only shell component of $H_1(G)$ with $V(S_1) = \{v_a\}$. For $H_2(G)$, $S_2$ is the only shell component with $V(S_2) = \{v_b\}$. But within $H_3(G)$, we have two shell components $S_3$ and $S_4$, in which $V(S_3) = \{v_c, v_d\}$ and $V(S_4) = \{v_e, v_f, v_g\}$, because $S_3$ and $S_4$ are not connected by the edges among $H_3(G)$. For $H_4(G)$, $S_5$ is the only shell component with $V(S_5) = \{v_h, v_i, v_j, v_k, v_l\}$. The graph in Figure~\ref{fig:shell_cp} will be often referred in the following examples, so we summarize the notations in Table~\ref{tb:shell_component} for your convenience.
\end{example}

\begin{table}[htb]
\small
\centering
\caption{\small Shell Components in Figure~\ref{fig:shell_cp}}
\label{tb:shell_component}
\begin{tabular}{|l|l|l|l|l|l|}
\hline
\textit{Shell component} & $S_1$ & $S_2$ & $S_3$ & $S_4$ & $S_5$ \\ \hline
\textit{Vertex set} & $v_a$ & $v_b$ & $v_c$, $v_d$ & $v_e$, $v_f$, $v_g$ & $v_h$, $v_i$, $v_j$, $v_k$, $v_l$ \\ \hline
\end{tabular}
\end{table}

For a shell component $S$ of $H_k(G)$, we denote $S.V$, $S.E$ and $S.c$ as the vertex set, edge set and the coreness of the vertices in $S$, i.e., $S.V = V(S)$, $S.E = E(S)$ and $S.c = c(v, G)~\forall v \in S.V$. We use the structure $\mathcal{SC}$ to index the shell components for all the vertices, in which for each $v \in V(G)$, $\mathcal{SC}[v]$ is the only shell component having $v \in \mathcal{SC}[v].V$ (Theorem~\ref{theo:only_sc}).
Algorithm~\ref{alg:sdecomp} and Algorithm~\ref{alg:shellconn} illustrate the process of decomposing each vertex into its shell component.

\begin{algorithm}[t]
\SetVline
\SetFuncSty{textsf}
\SetArgSty{textsf}
\small
\caption{\bf ShellDecomp($G$)}
\label{alg:sdecomp}

\Input
{
    $G:$ the graph
}
\Output
{
    $\mathcal{SC}:$ the index of shell components in $G$
}

\State{\textbf{CoreDecomp}($G$)}
\For{each $unassigned$ $u\in V(G)$}
{
    \State{$S \gets$ a new shell component}
    \State{$S.c := c(u, G)$}
    \State{$S.V := S.V\cup \{u\}$}
    \State{$u$ is set $assigned$}
    \State{\textbf{ShellConnect}($u$, $S$, $\mathcal{SC}$)}
    \State{$\mathcal{SC}[u] := S$}
}

\Return{$\mathcal{SC}$}
\end{algorithm}

\begin{algorithm}[t]
\SetVline
\SetFuncSty{textsf}
\SetArgSty{textsf}
\small
\caption{\bf ShellConnect($u$, $S$, $\mathcal{SC}$)}
\label{alg:shellconn}

\Input
{
    $u:$ a vertex, $S:$ the shell component containing $u$, $\mathcal{SC}:$ the shell component index
}

\For{each $v\in N(u, G)$}
{
    \If{$c(v) = c(u)$}
    {
        \State{$S.E := S.E\cup \{(u, v)\}$}
        \If{$v$ is $unassigned$}
        {
            \State{$S.V := S.V \cup \{v\}$}
            \State{$v$ is set $assigned$}
            \State{\textbf{ShellConnect}($v$, $S$, $\mathcal{SC}$)}
            \State{$\mathcal{SC}[v] := S$}
        }
    }
}
\end{algorithm}

In Algorithm~\ref{alg:sdecomp}, firstly we need to conduct core decomposition (Line 1) on $G$ so that we can get the coreness of each vertex. We traverse all the vertices with each vertex marked $unassigned$ as default (Line 2). Each time meeting an \textit{unassigned} vertex $u \in V(G)$, we create a new shell component $S$ (Line 3), set the related domains of $S$ and set $u$ as \textit{assigned} (Line 4-6). Then we call Algorithm~\ref{alg:shellconn} (details in the next paragraph) to recursively collect all the vertices which should be in $S$ (Line 7), followed by assigning $\mathcal{SC}[u] := S$ (Line 8). When all the vertices are set $assigned$ (in Algorithm~\ref{alg:sdecomp} or Algorithm~\ref{alg:shellconn}), we get the complete $\mathcal{SC}$. The time complexity of Algorithm~\ref{alg:sdecomp} is $\mathcal{O}(m)$ as we traverse each vertex's neighbors once.

In Algorithm~\ref{alg:shellconn}, for the vertex $u$, we traverse all its neighbors in $N(u, G)$ (Line 1). If $c(v) = c(u)$ (Line 2), we add the edge $(u, v)$ into $S.E$ (Line 3). Note that $(u, v)$ and $(v, u)$ are the same in our setting. Only if $v$ is \textit{unassigned} (Line 4), we add $v$ into $S.V$ and recursively call Algorithm~\ref{alg:shellconn} to find all the vertices of $S$ (Line 5-8).

\vspace{1mm}
\textbf{$\mathcal{C}[\cdot]$ and $\mathcal{A}[\cdot]$.} We define two affiliated structures, the collapser candidate set $\mathcal{C}[\cdot]$ and the anchor candidate set $\mathcal{A}[\cdot]$ w.r.t. all the shell components of $G$. For a shell component $S$, $\mathcal{C}[S] = \{v:~v \in S.V \vee c(v, G) > S.c \wedge N(v, G) \cap S.V \neq \emptyset\}$, and $\mathcal{A}[S] = \{v:~v \in S.V \vee c(v, G) < S.c \wedge N(v, G) \cap S.V \neq \emptyset\}$. $\mathcal{C}[\cdot]$ and $\mathcal{A}[\cdot]$ can be easily computed when traversing each vertex's neighbors in Algorithm~\ref{alg:sdecomp} or Algorithm~\ref{alg:shellconn}, without changing the time complexity.

\begin{lemma}
\label{lemma:col_one_dec_one}
If a vertex $x$ is collapsed in $G$, any other vertex $u \in V(G) \setminus \{x\}$ decreases its coreness by at most 1.
\end{lemma}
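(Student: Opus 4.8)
The plan is to reduce the claim to a statement about ordinary vertex deletion and then exploit the elementary fact that deleting a single vertex lowers every other vertex's degree by at most one. First I would observe that collapsing $x$ (setting $deg(x,G)=0$) forces $x$ to be removed at the very first stage of \textbf{CoreDecomp}, since it is the vertex of smallest degree; consequently, for every $u \neq x$ the value $c^-_x(u,G)$ equals the ordinary coreness $c(u, G-x)$ of $u$ in the graph $G-x$ obtained by deleting $x$ together with its incident edges. It therefore suffices to prove that deleting one vertex decreases the coreness of any surviving vertex by at most $1$.

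Next I would fix $u \neq x$ and set $k = c(u,G)$, so that $u \in V(C_k(G))$ by Definition~\ref{def:corenumber}. The crux of the argument is to exhibit a single explicit witness subgraph for the $(k-1)$-core of $G-x$: take $H = C_k(G) - x$, i.e. the $k$-core of $G$ with $x$ (if present) deleted. In $C_k(G)$ every vertex has degree at least $k$. Since $G$ is simple, any vertex $v \neq x$ shares at most one edge with $x$, so removing $x$ reduces its degree by at most one; hence $deg(v,H) \ge k-1$ for every $v \in V(H)$. Thus $H$ is a subgraph of $G-x$ in which every vertex has degree at least $k-1$.

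I would then invoke the maximality built into Definition~\ref{def:kcore}: $C_{k-1}(G-x)$ is \emph{the} maximal subgraph of $G-x$ all of whose vertices have degree at least $k-1$, so the degree-$(k-1)$ subgraph $H$ must satisfy $V(H) \subseteq V(C_{k-1}(G-x))$. Because $u \neq x$ we have $u \in V(H)$, and therefore $c^-_x(u,G) = c(u,G-x) \ge k-1 = c(u,G)-1$, which is exactly the desired bound $c(u,G) - c^-_x(u,G) \le 1$.

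The statement is elementary, so there is no serious obstacle; the step I would write most carefully is the degree bookkeeping in $H$, namely checking that the degrees are measured \emph{inside} $H$ rather than in $G$, and that in a simple graph the deletion of $x$ costs each remaining vertex at most the single edge it may have shared with $x$. The only other ingredient is the maximality of the core, and the nesting property $C_k(G)\subseteq C_{k'}(G)$ for $k\ge k'$ noted after Definition~\ref{def:kcore} guarantees that the claim extends uniformly across all coreness levels.
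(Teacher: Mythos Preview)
Your argument is correct. The reduction from collapsing to ordinary vertex deletion is valid (a degree-$0$ vertex is peeled off in the very first pass of \textbf{CoreDecomp}, so $c^-_x(u,G)=c(u,G-x)$ for $u\neq x$), and the witness subgraph $H=C_k(G)-x$ indeed has minimum degree $\ge k-1$, which by maximality of $C_{k-1}(G-x)$ yields $c(u,G-x)\ge k-1$.

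For comparison, the paper does not actually give a proof of this lemma; it simply points the reader to Theorem~4 in the cited collapsed-coreness work. Your write-up therefore supplies a self-contained argument where the paper relies on an external reference. The approach you take is the standard one-shot witness argument and is almost certainly what the cited theorem also does; the only thing you might trim is the closing remark about the nesting $C_k(G)\subseteq C_{k'}(G)$, which your proof never actually uses.
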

\begin{proof}
Please refer to the proof of Theorem 4 in~\cite{WWWJZhang}.
\end{proof}

\vspace{1mm}
\begin{lemma}
\label{lemma:anc_one_inc_one}
If a vertex $x$ is anchored in $G$, any other vertex $u \in V(G) \setminus \{x\}$ increases its coreness by at most 1.
\end{lemma}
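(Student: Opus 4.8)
The plan is to prove a structural containment between the anchored cores and the original cores, from which the per-vertex bound follows immediately. Concretely, I would first reduce the claim $c^+_x(u, G) \le c(u, G) + 1$ to a single inclusion: for every integer $k$,
\[
C^+_{k+1}(G) \subseteq C_k(G) \cup \{x\},
\]
where $C^+_{k+1}(G)$ denotes the $(k+1)$-core of $G$ computed after anchoring $x$ (i.e.\ with $deg(x, G) = +\infty$). Granting this inclusion, take any $u \ne x$ and set $k = c(u, G)$. By Definition~\ref{def:corenumber} we have $u \notin C_{k+1}(G)$, and applying the inclusion at level $k+1$ yields $u \notin C^+_{k+2}(G)$, hence $c^+_x(u, G) \le k+1 = c(u, G) + 1$, which is exactly the bound we want.

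The heart of the argument is therefore the displayed inclusion, which I would establish by a degree-counting argument on an induced subgraph. Let $W = V(C^+_{k+1}(G)) \setminus \{x\}$. Each ordinary vertex $u \in W$ satisfies the degree constraint of the anchored $(k+1)$-core, so $u$ has at least $k+1$ genuine neighbours inside $C^+_{k+1}(G)$. Since $G$ is simple, $u$ is adjacent to the removed anchor $x$ at most once, so its degree in the induced subgraph $G[W]$ is at least $(k+1) - 1 = k$. Thus $G[W]$ is a subgraph of $G$ in which every vertex has degree at least $k$; by the maximality of the $k$-core in Definition~\ref{def:kcore}, $G[W] \subseteq C_k(G)$, giving $W \subseteq V(C_k(G))$ and hence the inclusion.

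The only subtlety I anticipate is the bookkeeping around the anchor. I must ensure that the ``degree at least $k+1$'' invoked for the vertices of $W$ counts only real neighbours within $C^+_{k+1}(G)$ and does not secretly rely on $x$'s fictitious infinite degree, and that deleting $x$ removes at most one incident edge from each remaining vertex (which is where simplicity of $G$ enters). Everything else is routine once the inclusion is phrased this way; notably, the argument is self-contained and does not need to appeal to the dual collapsing bound of Lemma~\ref{lemma:col_one_dec_one}.
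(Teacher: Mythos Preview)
Your argument is correct. The inclusion $V(C^+_{k+1}(G))\setminus\{x\}\subseteq V(C_k(G))$ is the right structural fact, and your degree-counting justification is sound: every non-anchor vertex of the anchored $(k{+}1)$-core has at least $k{+}1$ genuine neighbours in that subgraph, loses at most one when $x$ is excised, and the resulting induced subgraph on $W$ therefore has minimum degree at least $k$, forcing $W\subseteq V(C_k(G))$ by maximality of the $k$-core. The deduction of the per-vertex bound from the inclusion is also clean.

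As for comparison with the paper: the paper does not actually supply a proof here; it simply defers to Theorem~4.6 of~\cite{SigmodLinghu}. Your write-up therefore provides exactly the self-contained argument that the paper omits. The route you take (pass to the anchored $(k{+}1)$-core, strip the anchor, and embed the remainder in the ordinary $k$-core) is the standard one and almost certainly what the cited reference does as well; there is no meaningful methodological divergence to discuss.
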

\begin{proof}
Please refer to the proof of Theorem 4.6 in~\cite{SigmodLinghu}.
\end{proof}

\vspace{1mm}
\begin{theorem}
\label{theo:c_index}
For collapsing $x$ in $G$, another vertex $u$ is $x$'s
collapsed follower indicates $x \in \mathcal{C}[S]$ s.t. $u \in S.V$.
\end{theorem}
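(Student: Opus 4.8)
The plan is to translate the follower condition into a statement about a single $k$-core computation and then track exactly which vertices the collapse of $x$ can possibly delete. Since $u$ is a collapsed follower, $c^-_x(u,G) < c(u,G)$, and by Lemma~\ref{lemma:col_one_dec_one} this drop is exactly one; write $k := c(u,G) = S.c$ where $S = \mathcal{SC}[u]$. Let $G^-$ denote $G$ after collapsing $x$ (equivalently, $x$ removed). Because $G^- \subseteq G$, monotonicity of the core gives $C_k(G^-) \subseteq C_k(G)$, so $C_k(G^-)$ is obtained from $C_k(G)$ by deleting $x$ and then cascading the removal of vertices whose degree falls below $k$. The fact that $u$ drops to coreness $k-1$ is precisely the statement that $u$ lies in this deleted set, so the whole argument reduces to understanding the structure of that set.

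First I would pin down two properties of the deleted vertices. (1) $c(x,G) \ge k$: if instead $c(x,G) < k$ then $x \notin C_k(G)$, and deleting a vertex outside the $k$-core leaves $C_k$ unchanged, so $u$ could not leave it, a contradiction; hence $x \in C_k(G)$. (2) Every deleted vertex $w \neq x$ has coreness exactly $k$: such a $w$ satisfies $w \in C_k(G)$ so $c(w,G)\ge k$, and $w\notin C_k(G^-)$ so $c^-_x(w,G)\le k-1$; Lemma~\ref{lemma:col_one_dec_one} then gives $c(w,G)\le c^-_x(w,G)+1\le k$, forcing $c(w,G)=k$. Thus all deleted vertices other than $x$ lie in the \ks $H_k(G)$.

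Next I would establish that the deletion cascade is connected to $x$. Ordering the deletions as $x=r_0,r_1,\dots$, each $r_i$ with $i\ge 1$ begins with degree $\ge k$ inside $C_k(G)$ and is deleted only once its degree has fallen below $k$, which requires that some neighbour was deleted earlier; by induction every deleted vertex is joined to $x$ by a path of deleted vertices. Applying this to $u$ yields a path $x=w_0,w_1,\dots,w_t=u$ in which $w_1,\dots,w_t$ all have coreness exactly $k$ by property (2).

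Finally I would match this path against the two disjuncts defining $\mathcal{C}[S]$. If $c(x,G)=k$, the entire path lies in $H_k(G)$ and connects $x$ to $u$, so by maximality of the shell component (Theorem~\ref{theo:only_sc}) we get $x\in S.V$. If $c(x,G)>k$, then $w_1$ has coreness $k$ and is connected to $u$ through coreness-$k$ vertices, hence $w_1\in S.V$; since $w_1\in N(x,G)$ this gives $c(x,G)>S.c$ and $N(x,G)\cap S.V\neq\emptyset$. Either case places $x\in\mathcal{C}[S]$. The main obstacle I anticipate is bridging the purely local definition of $\mathcal{C}[S]$ — which inspects only $S$ itself and its higher-coreness neighbours — with the a priori long-range cascade of core decomposition; the lever that resolves it is Lemma~\ref{lemma:col_one_dec_one}, whose bound of one forces every intermediate cascade vertex to sit precisely in the $k$-shell and therefore inside the single shell component $S$.
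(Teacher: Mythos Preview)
Your argument is correct, but it follows a genuinely different route from the paper's. The paper proceeds by contradiction: assuming $u$ is a follower with $u\in S.V$ yet $x\notin\mathcal{C}[S]$, it splits into the three cases $c(x,G)<S.c$, $c(x,G)=S.c$ with $x\notin S.V$, and $c(x,G)>S.c$ with $N(x,G)\cap S.V=\emptyset$, and in each case exhibits a concrete deletion order in core decomposition under which $u$ retains coreness $S.c$ (invoking Lemma~\ref{lemma:col_one_dec_one} only in the third case). You instead give a direct, structural proof: you analyse the deletion cascade from $C_k(G)$ to $C_k(G^-)$, use Lemma~\ref{lemma:col_one_dec_one} globally to force every cascade vertex other than $x$ into $H_k(G)$, and then read off a path of $k$-shell vertices from $x$ (or from a neighbour of $x$) to $u$, which by maximality of the shell component lands $x$ in $\mathcal{C}[S]$. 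Your approach is shorter and more transparent about \emph{why} the shell component is the right locality unit---the cascade literally cannot leave it---whereas the paper's case analysis verifies the boundary conditions of $\mathcal{C}[S]$ one by one without exposing this picture; on the other hand, the paper's argument makes explicit, via tailored deletion orders, that the remaining vertices of $S.V$ keep degree at least $S.c$, which is a more self-contained verification if one has not internalised the monotonicity of $k$-cores.
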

\begin{proof}
We prove it by contradiction. Assuming $u$ decreases its coreness since $x$ is collapsed, $u \in S.V$ and $x \notin \mathcal{C}[S]$. This firstly means $x \notin S.V$, according to the definition of $\mathcal{C}[S]$, then we have the following possible situations: 1) $c(x, G) < S.c$; 2) $c(x, G) = S.c$; 3) $c(x, G) > S.c \wedge N(x, G) \cap S.V = \emptyset$.

For situation 1), No matter $x$ is collapsed or not, $x$ is always deleted before $u$ in core decomposition (Algorithm~\ref{alg:coredecomp}), so that $c^-_x(u) = c(u)$ which contradicts with our assumption.

For situation 2), Consider deleting $V(G)$ in core decomposition without collapsing $x$. After all the vertices with coreness less than $S.c$ are deleted, for each $S' \neq S$ s.t. $S'.c = S.c$, it is available to delete $S'.V$ before $S.V$. And $S.V$ can remain in the graph still satisfying for each $v \in S.V$, $deg(v) \geq S.c$. We denote the deleted (resp. remained) vertex set so far as $V^d$ (resp. $V^r$). For each $v \in V^r \setminus S.V$, $c(v) > S.c$. Because $c(x) = S.c$ and $x \notin S.V$, we can conclude $x \in V^d$. For the graph with collapsing $x$ (deleted firstly), let us then delete $V^d \setminus \{x\}$. Now the remaining graph are still induced by $V^r$ satisfying for each $v \in V^r$, $deg(v) \geq S.c$. Since $u \in V^r$, $u$ is not a collapsed follower of $x$, which contradicts with our assumption.

For situation 3), in graph $G$ without collapsing $x$, for each $v \in S.V$, we denote $N^>(v) = \{w \in N(v, G):~c(w) > S.c\}$. As $c(v) = S.c$, we have $|N(v, S) \cup N^>(v)| \geq S.c$. With collapsing $x$ in $G$, core decomposition firstly deletes each $v \in V(G)$ s.t. $c^-_x(v) < S.c$. Reconsider each $w \in N^>(v)$ s.t. each $v \in S.V$. For situation 3), $w \neq x$. And based on Lemma~\ref{lemma:col_one_dec_one}, $c^-_x(w) \geq S.c$. Thus, $w$ remains in the graph. Now consider each $v \in S.V$, $N^>(v)$ are complete in the remaining graph, so we still have $|N(v, S) \cup N^>(v)| \geq S.c$. Thus, $v$ is not a collapsed follower of $x$. As $u \in S.V$, this contradicts with our assumption.
\end{proof}

\vspace{1mm}
\begin{theorem}
\label{theo:a_index}
For anchoring $x$ in $G$, another vertex $u$ is $x$'s
anchored follower indicates $x \in \mathcal{A}[S]$ s.t. $u \in S.V$.
\end{theorem}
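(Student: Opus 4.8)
The plan is to mirror the contradiction argument used for Theorem~\ref{theo:c_index}, dualising every step from coreness decrease to coreness increase. I would assume that $u$ is an anchored follower of $x$ with $u \in S.V$ (so $S.c = c(u,G)$) yet $x \notin \mathcal{A}[S]$, and derive a contradiction. Unfolding the definition, $x \notin \mathcal{A}[S]$ means $x \notin S.V$ together with $c(x,G) \ge S.c$ or $N(x,G)\cap S.V = \emptyset$, which splits into three situations matching the collapse proof: (i) $c(x,G) \ge S.c+1$; (ii) $c(x,G) = S.c$ with $x \notin S.V$; (iii) $c(x,G) < S.c$ with $N(x,G)\cap S.V=\emptyset$. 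A preliminary observation simplifies matters: in situation (ii) we automatically get $N(x,G)\cap S.V=\emptyset$, since a same-coreness neighbour of $x$ would lie in the same shell component as $x$, contradicting $x\notin S.V$. Hence (ii) and (iii) merge into the single case $c(x,G)\le S.c$ and $N(x,G)\cap S.V=\emptyset$. Throughout I would work with the anchored $(S.c+1)$-core $A_x$, i.e. the subgraph surviving the peeling of Algorithm~\ref{alg:coredecomp} at threshold $S.c+1$ with $x$ anchored, using Lemma~\ref{lemma:anc_one_inc_one} and the monotonicity $C_{S.c+1}(G) \subseteq A_x$; the goal in each case is to show $u \notin A_x$, i.e. $c^+_x(u,G) \le S.c$.

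For situation (i) I would argue that anchoring a vertex already of coreness at least $S.c+1$ leaves the $(S.c+1)$-core unchanged. Since $c(x,G) \ge S.c+1$, $x$ already lies in $C_{S.c+1}(G)$ and so has at least $S.c+1$ neighbours inside $C_{S.c+1}(G) \subseteq A_x$; thus $x$ meets the degree threshold in $A_x$ even without its anchor bonus, so \emph{every} vertex of $A_x$ has degree at least $S.c+1$ in $A_x$. By maximality of $C_{S.c+1}(G)$ this forces $A_x \subseteq C_{S.c+1}(G)$, and with the reverse containment $A_x = C_{S.c+1}(G)$. As $c(u,G)=S.c$ gives $u \notin C_{S.c+1}(G)$, we obtain $u \notin A_x$, contradicting that $u$ is an anchored follower.

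The main obstacle is the combined case $c(x,G)\le S.c$ with $N(x,G)\cap S.V=\emptyset$, where I must rule out that the anchor's support reaches $S$ indirectly through higher-coreness vertices. Let $P := A_x \cap S.V$ and suppose $P \neq \emptyset$. For any $v \in P$ (note $v \neq x$) I would classify its neighbours by coreness: those of coreness $> S.c$ all lie in $W := V(C_{S.c+1}(G)) \subseteq A_x$; those of coreness $=S.c$ lie in the same shell component $S$, so the surviving ones lie in $P$; and those of coreness $< S.c$ cannot lie in $A_x$ (the only possible such survivor is $x$, but $x \notin N(v,G)$ by assumption). Hence $deg(v,A_x) = |N(v,G)\cap W| + |N(v,G)\cap P| \ge S.c+1$, so every $v\in P$ satisfies $deg(v,G[P]) + |N(v,G)\cap W| \ge S.c+1$. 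This means the subgraph induced by $W \cup P$ has minimum degree at least $S.c+1$ (vertices of $W$ already do, since $W=C_{S.c+1}(G)$), so by maximality of $C_{S.c+1}(G)$ it is contained in $W$, forcing $P \subseteq W$; but $P\subseteq S.V$ has coreness $S.c$ and is disjoint from $W$, whence $P=\emptyset$, a contradiction. Therefore $u \notin A_x$ and $c^+_x(u,G)\le S.c=c(u,G)$, again contradicting that $u$ is a follower. I expect the only delicate points to be the neighbour classification and the remark that a same-coreness neighbour of $x$ would force $x$ into $S$; everything else reduces to the maximality of the ordinary $(S.c+1)$-core.
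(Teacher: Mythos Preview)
Your argument is correct and follows the same contradiction-by-cases scaffold as the paper, but the execution differs in two places worth noting.

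First, you merge situations (ii) and (iii) by observing that $c(x)=S.c$ together with $x\notin S.V$ already forces $N(x,G)\cap S.V=\emptyset$ (a same-coreness edge would drag $x$ into $S$). The paper instead handles the equal-coreness case separately via a deletion-order argument: peel everything of coreness below $S.c$, then peel $S.V$ before the other $S.c$-shell components; since $x$ lies outside this deleted prefix, the identical deletions go through with $x$ anchored, so $c^+_x(u)=c(u)$. Both routes work; yours is shorter.

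Second, and more substantively, for the case $c(x)\le S.c$ with $N(x,G)\cap S.V=\emptyset$ you set $P=A_x\cap S.V$, classify the neighbours of each $v\in P$ using Lemma~\ref{lemma:anc_one_inc_one}, and conclude that the induced subgraph $G[W\cup P]$ has minimum degree at least $S.c+1$, whence $P\subseteq W$ by maximality of the ordinary $(S.c+1)$-core. The paper instead asserts the per-vertex bound $|N(v,S)\cup N^{>}(v)|<S.c+1$ for every $v\in S.V$ and concludes directly that no $v\in S.V$ can survive into the anchored $(S.c+1)$-core. Your maximality argument is the more robust one here: the paper's per-vertex inequality is not valid in general (for instance, a hub $v$ joined to two vertex-disjoint triangles gives seven vertices all of coreness~$3$, yet $v$ has six neighbours in its own shell component), so a simultaneous argument of exactly the kind you supply is what is actually needed to finish the proof.
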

\begin{proof}
We prove it by contradiction. Assuming $u$ increases its coreness since $x$ is anchored, $u \in S.V$ and $x \notin \mathcal{A}[S]$. This firstly means $x \notin S.V$, according to the definition of $\mathcal{A}[S]$, then we have the following possible situations: 1) $c(x, G) > S.c$; 2) $c(x, G) = S.c$; 3) $c(x, G) < S.c \wedge N(x, G) \cap S.V = \emptyset$.

For situation 1), No matter $x$ is anchored or not, $x$ is always deleted after $u$ in core decomposition (Algorithm~\ref{alg:coredecomp}), so that $c^+_x(u) = c(u)$ which contradicts with our assumption.

For situation 2), Consider deleting $V(G)$ in core decomposition without anchoring $x$. After all the vertices with coreness less than $S.c$ are deleted, it is available to delete $S.V$ before each $S'.V$ s.t. $S' \neq S$ and $S'.c = S.c$. And each such $S'$ can remain in the graph still satisfying for each $v \in S'.V$, $deg(v) \geq S.c$. We denote the deleted vertex set so far as $V^d$. Because $c(x) = S.c$ and $x \notin S.V$, we can conclude $x \notin V^d$. For the graph with anchoring $x$, as $x \notin V^d$, we can safely still delete $V^d$ following the same vertex sequence. Thus for each $v \in V^d$, $c^+_x(v) = c(v)$. Since $u \in V^d$, $u$ is not an anchored follower of $x$, which contradicts with our assumption.

For situation 3), in graph $G$ without anchoring $x$, for each $v \in S.V$, we denote $N^<(v) = \{w \in N(v, G):~c(w) < S.c\}$ and $N^>(v) = \{w \in N(v, G):~c(w) > S.c\}$. As $c(v) = S.c$, we have $|N(v, S) \cup N^>(v)| < S.c + 1$. When conducting core decomposition with anchoring $x$ in $G$, reconsider each $w \in N^<(v)$ s.t. each $v \in S.V$. For situation 3), $w \neq x$. And based on Lemma~\ref{lemma:anc_one_inc_one}, $c^+_x(w) < S.c + 1$, thus each such $w$ is not in the $(S.c+1)$-core. Now consider each $v \in S.V$, $N^<(v)$ are not in the $(S.c+1)$-core and we still have $|N(v, S) \cup N^>(v)| < S.c + 1$. Thus, $v$ is not an anchored follower of $x$. As $u \in S.V$, this contradicts with our assumption.
\end{proof}

\vspace{1mm}
\textbf{Parallel Computation.} For each $v \in V(G)$, We define $^-F[v][S] = \{u \in S.V:~u \in {^-\mathcal{F}(v, G)}\}$ and $^+F[v][S] = \{u \in S.V:~u \in {^+\mathcal{F}(v, G)}\}$, so as to compute $v$'s collapsed followers and anchored followers in each atom unit, i.e., the shell component. Based on Theorems \ref{theo:c_index} and \ref{theo:a_index}, we know that for the vertex $v$, its collapsed followers (resp. anchored followers) are only from each $S.V$ s.t. $v \in \mathcal{C}[S]$ (resp. $v \in \mathcal{A}[S]$). Thus, in Algorithm~\ref{alg:parallel_comp}, we parallelly compute $^-F[v][S]$ (resp. $^+F[v][S]$) w.r.t. $\{S:~v \in \mathcal{C}[S]\}$ (resp. $\{S:~v \in \mathcal{A}[S]\}$). Then $^-\mathcal{F}(v, G) = \bigcup_{\{S:~v \in \mathcal{C}[S]\}}{^-F[v][S]}$ and $^+\mathcal{F}(v, G) = \bigcup_{\{S:~v \in \mathcal{A}[S]\}}{^+F[v][S]}$. In order to unify the offline computation and the online maintenance when any edge is inserted or removed, Algorithm~\ref{alg:parallel_comp} takes a new shell component set $\hat{S}$ as an input. Now we let $\hat{S}$ contain all the shell components in $G$, i.e., $\hat{S} := \bigcup_{u \in V(G)}\{\mathcal{SC}[u]\}$, then call \texttt{ParallelFollowerComp}($\mathcal{C}[\cdot]$, $\mathcal{A}[\cdot]$, $\hat{S}$). For each $S \in \hat{S}$, whenever there exists an available thread (Line 1), the computation in Line 2-7 can be conducted, followed by releasing this thread. As the number of shell components in $G$ is much more than the number of available threads, the dynamic scheduling ensures the computing resources are mostly utilized in parallel. Based on theorem~\ref{theo:only_sc}, each $^-F[v][S]$ (resp. $^+F[v][S]$) does not overlap with others, which means the threads are never wasted on redundant computations. The functions \texttt{FindCollapsedFollowers} (Line 3) and \texttt{FindAnchoredFollowers} (Line 6) are equipped with further accelerating techniques, which are presented by Algorithm~\ref{alg:col_followers} and Algorithm~\ref{alg:anc_followers} in subsection \ref{subsec:followers}, respectively.

\begin{figure}[htp]
\centerline{\includegraphics[width=0.7\columnwidth]{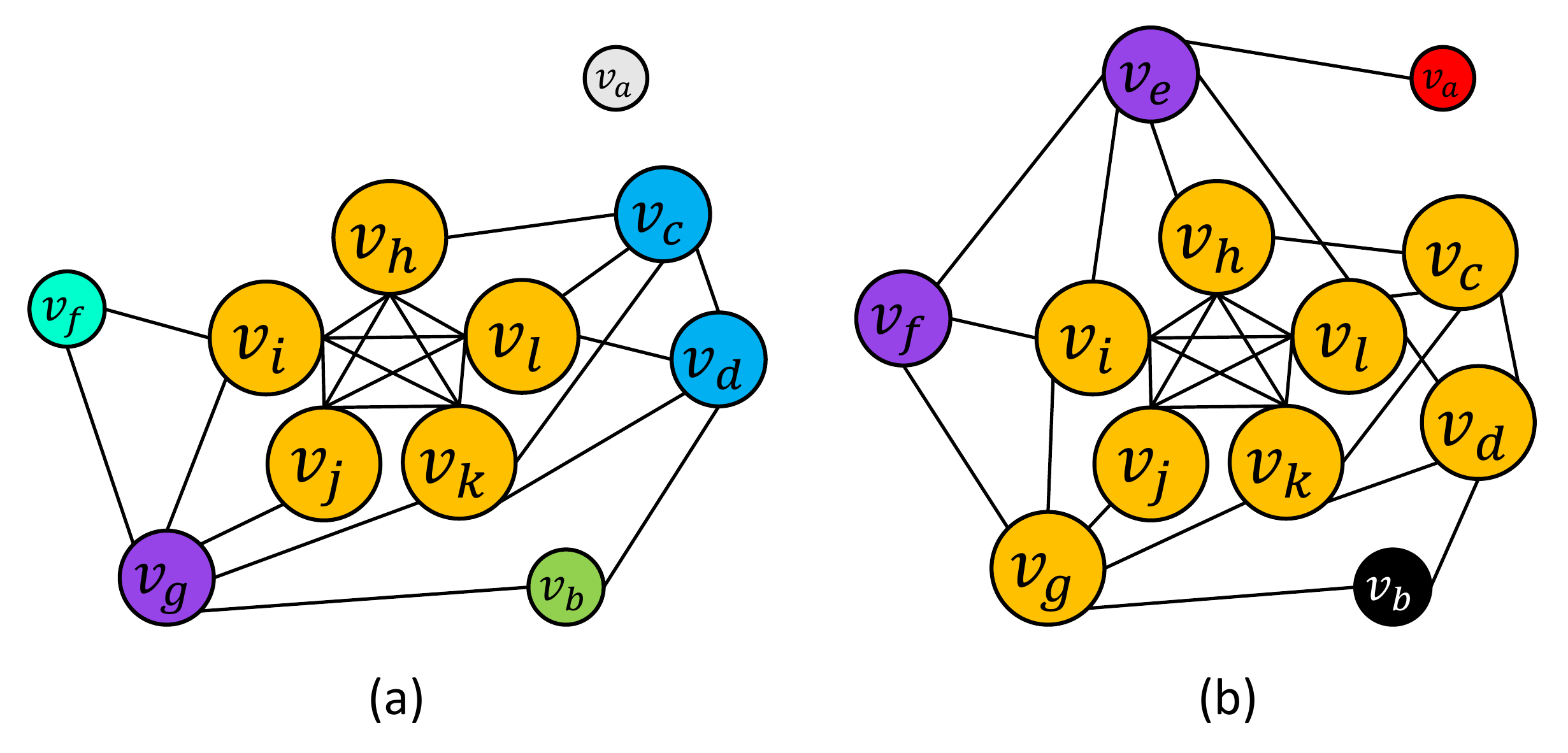}}
\caption{Followers Computation.}
\label{fig:followers}
\end{figure}

\begin{example}
\label{ex:followers_col}
For the graph in Figure~\ref{fig:shell_cp}, Figure~\ref{fig:followers} (a) shows the graph with collapsing $v_e$. We retain the notations in Example~\ref{ex:ks}. Please refer to Table~\ref{tb:shell_component} for your convenience. There are two shell components $S_1$ and $S_4$ such that $v_e \in \mathcal{C}[S_1]$ and $v_e \in \mathcal{C}[S_4]$. Then $^-F[v_e][S_1]$ and $^-F[v_e][S_4]$ can be computed in parallel. We have $^-F[v_e][S_1] = \{v_a\}$ with $c^-_{v_e}(v_a) = 0$, and $^-F[v_e][S_4] = \{v_f\}$ with $c^-_{v_e}(v_f) = 2$.
\end{example}

\begin{example}
\label{ex:followers_anc}
For the graph in Figure~\ref{fig:shell_cp}, Figure~\ref{fig:followers} (b) shows the graph with anchoring $v_b$. We retain the notations in Example~\ref{ex:ks}. Please refer to Table~\ref{tb:shell_component} for your convenience. There are three shell components $S_2$, $S_3$ and $S_4$ such that $v_b \in \mathcal{A}[S_2]$, $v_b \in \mathcal{A}[S_3]$ and $v_b \in \mathcal{A}[S_4]$. $^+F[v_b][S_2]$, $^+F[v_b][S_3]$ and $^+F[v_b][S_4]$ can be computed in parallel. We have $^+F[v_b][S_2] = \emptyset$, $^+F[v_b][S_3] = \{v_c, v_d\}$ with $c^+_{v_b}(v_c) = c^+_{v_b}(v_d) = 4$, $^+F[v_b][S_4] = \{v_g\}$ with $c^+_{v_b}(v_g) = 4$.
\end{example}

\begin{algorithm}[t]
\SetVline 
\SetFuncSty{textsf}
\SetArgSty{textsf}
\small
\caption{\bf ParallelFollowerComp($\mathcal{C}[\cdot]$, $\mathcal{A}[\cdot]$, $\hat{S}$)}
\label{alg:parallel_comp}
\Input
{
    $\mathcal{C}[\cdot]:$ the collapsed candidate set, $\mathcal{A}[\cdot]:$ the anchored candidate set, $\hat{S}:$ the new shell component set
}
\Output
{
    $^-\mathcal{F}(v, G)$ and $^+\mathcal{F}(v, G)$ for each $v \in V(G)$
}
\For{each $S \in \hat{S}$ in \texttt{\underline{dynamic multithreads}}}
{
    \For{each $v \in \mathcal{C}[S]$}
    {
        \State{$^-F[v][S] := \textbf{FindCollapsedFollowers}(v, S)$}
        \State{$^-\mathcal{F}(v, G) := {^-\mathcal{F}(v, G)} \cup {^-F[v][S]}$}
    }
    \For{each $v \in \mathcal{A}[S]$}
    {
        \State{$^+F[v][S] := \textbf{FindAnchoredFollowers}(v, S)$}
        \State{$^+\mathcal{F}(v, G) := {^+\mathcal{F}(v, G)} \cup {^+F[v][S]}$}
    }
}
\Return{$^-\mathcal{F}(v, G)$ and $^+\mathcal{F}(v, G)$ for each $v \in V(G)$}
\end{algorithm}

\subsection{The Maintenance w.r.t. Edge Streaming}
\label{subsec:maintenance}

We consider the following two situations of edge streaming: 1) a single edge is removed from the graph; 2) a single edge is inserted into the graph. Please note that, multiple edges and vertices streaming can be regarded as a sequence of situations 1) and 2). We maintain the collapsed follower set $^-\mathcal{F}(v, G)$ and anchored follower set $^+\mathcal{F}(v, G)$ for each $v \in V(G)$ by Algorithm~\ref{alg:smaintenance} which unifies situations 1) and 2). 
Specifically, in the updated graph $G$ ($E(G) \cup \{(v_s, v_t)\}$ or $E(G) \setminus \{(v_s, v_t)\}$), we firstly adopt the state-of-the-art algorithm of core maintenance in \cite{KOrder} to update $c(u, G)$ for each $u \in V(G)$ (Line 1), followed by resetting each vertex as \textit{unassigned} (Line 2-3). In Line 4, $\mathcal{SC'}[\cdot]$, $\mathcal{C'}[\cdot]$ and $\mathcal{A'}[\cdot]$ are initially copied from the current $\mathcal{SC}[\cdot]$, $\mathcal{C}[\cdot]$ and $\mathcal{A}[\cdot]$, and they will be properly updated. Line 5-13 collects the new shell components into $\hat{S}$. Without the need to call Algorithm~\ref{alg:sdecomp} again for the whole graph, the maintenance process only needs to start from each vertex $u$ in $\mathcal{SC}[v_s]$ and $\mathcal{SC}[v_t]$ (Line 6). Line 7-12 do the same operations as Line 3-8 of Algorithm~\ref{alg:sdecomp}. Please note that $\mathcal{C}'[\cdot]$ and $\mathcal{A}'[\cdot]$ can be updated straightforward during Line 6-13 according to their definitions (Line 14), because only $\mathcal{A}'[S']$ and $\mathcal{C}'[S']$ for each $S' \in \hat{S}$ need to be updated. We denote an updated vertex set as $U = \bigcup_{S' \in \hat{S}} S'.V$ (Line 15). By traversing all the old shell components of the vertices in $U$ (using $\mathcal{SC}$ instead of $\mathcal{SC}'$ in Line 16), we can remove the expired collapsed followers and anchored followers (Line 17-20, using $\mathcal{C}[\cdot]$ and $\mathcal{A}[\cdot]$ instead of $\mathcal{C'}[\cdot]$ and $\mathcal{A'}[\cdot]$). At last, for the new shell component set $\hat{S}$, we call Algorithm~\ref{alg:parallel_comp} with inputting the updated $\mathcal{C'}[\cdot]$ and $\mathcal{A'}[\cdot]$ (Line 21) to compute the new followers.

\begin{algorithm}[t]
\SetVline
\SetFuncSty{textsf}
\SetArgSty{textsf}
\small
\caption{\bf FollowerMaintain($(v_s, v_t)$, $G$)}
\label{alg:smaintenance}

\Input
{
    $(v_s, v_t):$ the inserted or removed edge, $G:$ the graph with $E(G) \cup \{(v_s, v_t)\}$ or $E(G) \setminus \{(v_s, v_t)\}$
}
\State{Conduct the core maintenance~\cite{KOrder} to get $c(u, G)$ for each $u \in V(G)$}
\For{each $u \in V(G)$}
{
    \State{$u$ is set as $unassigned$}
}
\State{$\mathcal{SC'}[\cdot]$, $\mathcal{C'}[\cdot]$, $\mathcal{A'}[\cdot]$ $\gets$ $\mathcal{SC}[\cdot]$, $\mathcal{C}[\cdot]$, $\mathcal{A}[\cdot]$}
\State{$\hat{S} \gets$ the new shell component set}
\For{each $unassigned$ $u \in S.V$ s.t. each $S \in \{\mathcal{SC}[v_s], \mathcal{SC}[v_t]\}$}
{
    \State{$S' \gets$ a new shell component}
    \State{$S'.c := c(u, G)$}
    \State{$S'.V := S'.V\cup \{u\}$}
    \State{$u$ is set $assigned$}
    \State{\textbf{ShellConnect}($u$, $S'$, $\mathcal{SC'}$)}
    \State{$\mathcal{SC'}[u] := S'$}
    \State{$\hat{S} = \hat{S} \cup \{S'\}$}
}
\State{$\mathcal{C'}[\cdot]$ and $\mathcal{A'}[\cdot]$ are updated while doing Line 6-13}
\State{$U \gets \bigcup_{S' \in \hat{S}} S'.V$}
\For{each $S \in \bigcup_{u \in U} \{\mathcal{SC}[u]\}$}
{
    \For{each $v \in \mathcal{C}[S]$}
    {
        \State{$^-\mathcal{F}(v, G) := {^-\mathcal{F}(v, G)} \setminus {^-F[v][S]}$}
    }
    \For{each $v \in \mathcal{A}[S]$}
    {
        \State{$^+\mathcal{F}(v, G) := {^+\mathcal{F}(v, G)} \setminus {^+F[v][S]}$}
    }
}
\State{\textbf{ParallelFollowerComp}($\mathcal{C'}[\cdot]$, $\mathcal{A'}[\cdot]$, $\hat{S}$)}
\end{algorithm}

\begin{figure}[htp]
\centerline{\includegraphics[width=0.7\columnwidth]{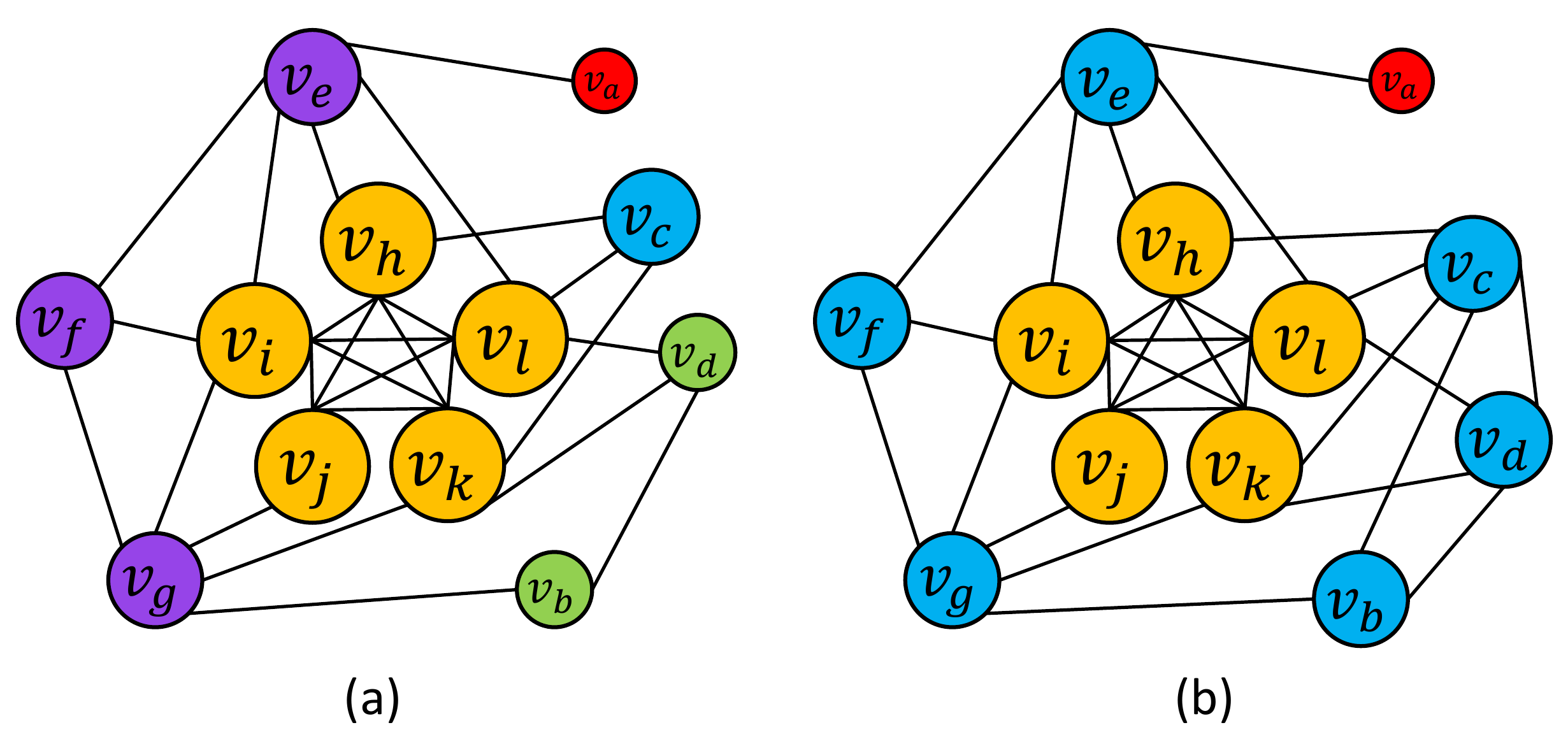}}
\caption{Followers Maintenance.}
\label{fig:maintenance}
\end{figure}

\begin{table}[htb]
\small
\centering
\caption{\small Followers Maintenance w.r.t. removing $(v_c, v_d)$}
\label{tb:maintenance_rmv}
\begin{tabular}{|p{0.03\columnwidth}|p{0.3\columnwidth}|p{0.3\columnwidth}|}
\hline
 & $\mathcal{C}'[\cdot]$ & $\mathcal{A}'[\cdot]$ \\ \hline
$S_1$ & $v_a$, $v_e$ & $v_a$ \\ \hline
$S_4$ & $v_e$, $v_f$, $v_g$, $v_h$, $v_i$, $v_j$, $v_k$, $v_l$ & $v_a$, $v_e$, $v_f$, $v_g$ \\ \hline
$S_5$ & $v_h$, $v_i$, $v_j$, $v_k$, $v_l$ & $v_c$, $v_d$, $v_e$, $v_f$, $v_g$, $v_h$, $v_i$, $v_j$, $v_k$, $v_l$ \\ \hline \hline
$S_6$ & $v_b$, $v_d$, $v_g$, $v_k$, $v_l$ & $v_b$, $v_d$ \\ \hline
$S_7$ & $v_c$, $v_h$, $v_k$, $v_l$ & $v_c$ \\ \hline
\end{tabular}
\end{table}

\begin{table}[htb]
\small
\centering
\caption{\small Followers Maintenance w.r.t. inserting $(v_b, v_c)$}
\label{tb:maintenance_ist}
\begin{tabular}{|p{0.03\columnwidth}|p{0.3\columnwidth}|p{0.3\columnwidth}|}
\hline
 & $\mathcal{C}'[\cdot]$ & $\mathcal{A}'[\cdot]$ \\ \hline
$S_1$ & $v_a$, $v_e$ & $v_a$ \\ \hline
$S_5$ & $v_h$, $v_i$, $v_j$, $v_k$, $v_l$ & $v_c$, $v_d$, $v_e$, $v_f$, $v_g$, $v_h$, $v_i$, $v_j$, $v_k$, $v_l$ \\ \hline \hline
$S_8$ & $v_b$, $v_c$, $v_d$, $v_e$, $v_f$, $v_g$, $v_h$, $v_i$, $v_j$, $v_k$, $v_l$ & $v_a$, $v_b$, $v_c$, $v_d$, $v_e$, $v_f$, $v_g$ \\ \hline
\end{tabular}
\end{table}

\begin{example}
\label{ex:maintenance_rmv}
In Figure~\ref{fig:maintenance} (a), we remove the edge $(v_c, v_d)$ from the graph in Figure~\ref{fig:shell_cp}. All the notations in Example~\ref{ex:ks} (Table~\ref{tb:shell_component}) are retained. $S_2$ and $S_3$ are expired. We have the new $S_6$ with $S_6.V = \{v_b, v_d\}$ and $S_6.c = 2$, and the new $S_7$ with $S_7.V = \{v_c\}$ and $S_7.c = 3$. Table~\ref{tb:maintenance_rmv} shows all the updated shell components and the vertices in $\mathcal{C}'[\cdot]$ and $\mathcal{A}'[\cdot]$. The collapsed and anchored followers of $S_1$, $S_4$ and $S_5$ retain the same (row 2-4). Only the collapsed and anchored followers of $S_6$ and $S_7$ need to be computed (row 5-6).
\end{example}

\begin{example}
\label{ex:maintenance_ist}
In Figure~\ref{fig:maintenance} (b), we insert the edge $(v_b, v_c)$ into the graph in Figure~\ref{fig:shell_cp}. All the notations in Table~\ref{tb:shell_component} are retained. $S_2$, $S_3$ and $S_4$ are expired. We have the new $S_8$ with $S_8.V = \{v_b, v_c, v_d, v_e, v_f, v_g\}$ and $S_8.c = 3$. Table~\ref{tb:maintenance_ist} shows all the updated shell components and the vertices in $\mathcal{C}'[\cdot]$ and $\mathcal{A}'[\cdot]$. The collapsed and anchored followers of $S_1$ and $S_5$ retain the same (row 2-3). Only the collapsed and anchored followers of $S_8$ need to be computed (row 4).
\end{example}

\subsection{The Efficient Followers Computation}
\label{subsec:followers}

\begin{algorithm}[t]
\SetVline 
\SetFuncSty{textsf}
\SetArgSty{textsf}
\small
\caption{\bf FindCollapsedFollowers($x$, $S$)}
\label{alg:col_followers}
\Input
{
  $x:$ the collapser candidate, $S:$ a shell component
}
\Output
{
  $^-F[x][S]:$ the collapsed follower set of $x$ in $S$
}
\State{$Q \gets$ a queue}
\If{$x \in S.V$}
{
    \State{$x$ is set \discarded; $Q.push(x)$}
}
\Else
{
    \For{each $u \in N(x, G) \cap S.V$}
    {
        \State{$HS(u) := HS(u) - 1$; $Q.push(u)$}
    }
}
\While{$Q \not = \emptyset$ }
{
    \State{$u \leftarrow Q.pop()$}
    \If{$u \neq x$}
    {
        \State{$d^+(u) := HS(u) + |\{v:~v\in N(u, S) \wedge v~is~not~\discarded\}|$}
        \If{$d^+(u) < S.c$}
        {
            \State{$u$ is set \discarded}
        }
    }
    \If{$u$ is \discarded}
    {
        \For{each $v \in N(u, S)$ with $v$ is not \discarded and $v \notin Q$}
        {
            \State{$Q.push(v)$}
        }
    }
}
\State{$^-F[x][S] \gets \discarded$ vertices in $S.V$ except for $x$ }
\Return{$^-F[x][S]$ }
\end{algorithm}

\begin{algorithm}[t]
\SetVline 
\SetFuncSty{textsf}
\SetArgSty{textsf}
\small
\caption{\bf FindAnchoredFollowers($x$, $S$)}
\label{alg:anc_followers}
\Input
{
  $x:$ the anchor candidate, $S:$ a shell component
}
\Output
{
  $^+F[x][S]:$ the anchored follower set of $x$ in $S$
}
\State{$H \gets$ a min heap w.r.t. the layer value of each vertex}
\If{$x \in S.V$}
{
    \State{$x$ is set \survived; $H.push(x)$}
}
\Else
{
    \For{each $u \in N(x, G) \cap S.V$}
    {
        \State{$HS(u) := HS(u) + 1$; $H.push(u)$}
    }
}
\While{$H \not = \emptyset$ }
{
    \State{$u \leftarrow H.pop()$}
    \If{$u \neq x$}
    {
        \State{$d^{+}(u) := HS(u) + |\{v:~v \in N(u, S) \land l(v) \leq l(u) \land (v~is~\survived \vee v \in H)\}| + |\{v:~v \in N(u, S) \land l(v) > l(u) \land v~is~not~discarded\}|$}
        \If {$d^{+}(u) \geq S.c + 1$}
        {
            \State{$u$ is set \survived}

        }
    }
    \If{$u$ is \survived}
    {
        \For{each $v \in N(u, S)$ with $l(v) > l(u)$ and $ v\notin H$}
        {
            \State{$H.push(v)$}
        }
    }
    \Else
    {
        \State{$u$ is set $\discarded$ }
        \State{\textbf{Shrink}($u$) (Algorithm~\ref{alg:shrink})}
    }
}
\State{$^+F[x][S] \gets \survived$ vertices in $S.V$ except for $x$ }
\Return{$^+F[x][S]$ }
\end{algorithm}

\begin{algorithm}[t]
\SetVline 
\SetFuncSty{textsf}
\SetArgSty{textsf}
\small
\caption{\bf Shrink($u$)}
\label{alg:shrink}
\Input
{
   $u:$ the shrinked vertex
}

\For{each \survived neighbor $v$ with $v$ $\not = x$ }
{
 \State{$d^{+}(v) := d^{+}(v) - 1$}
 \If{$d^{+}(v) < S.c + 1$}
 {
   \State{$v$ is set \discarded}
   \State{$T.add(v)$}
 }
}
\For {each $v \in T$ }
{
 \State{\textbf{Shrink}($v$)}
}
\end{algorithm}

Now we introduce our efficient algorithms of computing the collapsed followers and anchored followers of one vertex in one shell component. For each shell component $S$ in $G$, Algorithm~\ref{alg:col_followers} computes $^-F[v][S]$ for each $v \in \mathcal{C}[S]$, and Algorithm~\ref{alg:anc_followers} computes $^+F[v][S]$ for each $v \in \mathcal{A}[S]$. Both of the algorithms need the \textit{higher coreness support} of each $u \in V(G)$, which is introduced as follows.

\vspace{1mm}
\textbf{Higher Coreness Support.} For each $u \in V(G)$, we define its higher coreness support, denoted by $HS(u)$, as the number of $u$'s neighbors having higher coreness than $u$, i.e.,$HS(u) = |\{v \in N(u, G)~:~c(v) > c(u)\}|$. The higher coreness supports are incidentally computed and recorded when traversing each vertex's neighbours in Algorithm~\ref{alg:sdecomp}.

\vspace{1mm}
\textbf{Collapsed Followers Computation.}
In Algorithm~\ref{alg:col_followers}, we utilize a queue $Q$ (Line 1) to explore the collapsed followers, starting from the collapser vertex $x$. If $x \in S.V$, $x$ is set \textit{discarded} and pushed into $Q$ (Line 2-3). Note that, all the vertices in $S.V$ are not \textit{discarded} initially, and any \textit{discarded} vertex (except for $x$) becomes a collapsed follower. If $x \notin S.V$, for each $u \in N(x, G) \cap S.V$, we reduce $HS(u)$ by 1 and push $u$ into $Q$ (Line 4-6). Then we traverse $Q$ until it becomes empty (Line 7). Each time when we pop the top vertex $u$ (Line 8), if $u \neq x$, we need to decide whether it is \textit{discarded} (Line 9-12). Specifically, we compute a degree upper bound $d^+(u)$ as Line 10 shows. If $d^+(u) < S.c$, $u$ is set \textit{discarded}, and we push each $v \in N(u, S)$ into $Q$ (Line 13-15). Note that, we can avoid repeatedly push the same vertex into $Q$ (Line 14). After traversing $Q$, all the \textit{discarded} vertices in $S.V$ except for $x$ form $^-F[x][S]$ (Line 16).

\vspace{1mm}
\textbf{Anchored Followers Computation}.
We use Algorithm~\ref{alg:anc_followers} to compute $^+F[x][S]$, the anchored followers of $x$ in $S$. The algorithm is straightforward adapted from the Algorithm 4 of~\cite{SigmodLinghu}, so we omit the theoretical analysis and the proof of algorithm correctness. Please also refer to \textit{Degree Check} and Theorem 4.15 in section 4.4 of~\cite{SigmodLinghu}. Instead, Example~\ref{ex:followers} illustrates the process of Algorithm~\ref{alg:anc_followers} in detail. The core of Algorithm~\ref{alg:anc_followers} is utilizing the \textit{layer value} of each vertex, which needs to be illustrated firstly as follows.

\vspace{1mm}
\textbf{Layer Value}.
The vertices in the \ks can be further divided to different vertex sets, named layers, according to their deletion sequence in the core decomposition (Algorithm~\ref{alg:coredecomp}).
We use $H_{k}^{i}$ to denote the $i$-layer of the $k$-shell, which is the set of vertices that are deleted in the $i$-th batch.
Specifically, when $i = 1$, $H_{k}^{i}$ is defined as $\{u:~deg(u,C_k(G))<k+1~\wedge~u \in C_k(G)\}$.
The deletion of the 1st-layer will produce the 2nd-layer.
Recursively, when $i > 1$, $H_{k}^{i} = \{u:~deg(u, G_i)<k+1~\wedge~u \in G_i\}$ where $G_1 = C_k(G)$ and $G_i$ is the subgraph induced by $V(G_{i-1})\setminus H_k^{i-1}$ on $C_k(G)$.
For each vertex $u \in V(G)$, there is only one $H_{k}^{i}$ s.t. $u \in H_{k}^{i}$ during core decomposition, then we denote $l(u) = i$ as the \textit{layer value} of $u$. Obviously, the layer values are incidentally computed and recorded when conducting Algorithm~\ref{alg:coredecomp}.

\begin{example}
\label{ex:followers}
For the graph $G$ in Figure~\ref{fig:shell_cp} (Table~\ref{tb:shell_component}), we follow all the notations in Example~\ref{ex:ks}. For $v_b$, we need to compute $^+F[v_b][S_2]$, $^+F[v_b][S_3]$ and $^+F[v_b][S_4]$. We follow the process of \texttt{FindAnchoredFollowers}($v_b, S_3$) in Algorithm~\ref{alg:anc_followers} as an example. As $v_b \notin S_3.V$ (Line 4), we have $HS(v_d) = 2 + 1 = 3$ (originally $HS(v_d) = 2$ w.r.t. $v_k$ and $v_l$) and $v_d$ is pushed into $H$ (Line 5-6). After $v_d$ is popped (Line 8), we compute $d^+(v_d)$ (Line 10). Because $l(v_d) = 1$, $l(v_c)$ = 2 and $v_c$ is not \textit{discarded}, $d^+(v_d) = 3 + 0 + 1 = 4$. Since $S_3.c = 3$, $v_d$ is set \textit{survived} (Line 11-12) and $v_c$ is pushed into $H$ (Line 13-15). Then $v_c$ is popped and $d^+(v_c) = 3 + 1 + 0 = 4$, so $v_c$ is set \textit{survived} and no more vertex is pushed into $H$. Now we can return $^+F[v_b][S_3] = \{v_c, v_d\}$. Please note that, any vertex is neither \textit{discarded} or \textit{survived} unless it is explicitly set so. Once a vertex needs to be set \discarded (Line 17), it may cause a cascade of vertices becoming \discarded. We call Algorithm~\ref{alg:shrink} (Line 18) which recursively discards all the need-to-be vertices.
\end{example}

\vspace{2mm}
\section{Experimental Evaluation}
\label{sec:exp}

\begin{table}
\small
  \centering
  \caption{Statistics of Datasets}
    \begin{tabular}{|l|l|l|l|l|l|}
      \hline
      \textbf{Dataset}  & \textbf{Nodes}  & \textbf{Edges} & $d_{avg}$ & $d_{max}$ & $k_{max}$ \\ \hline \hline

      \texttt{Facebook(\textbf{F.})}  &  22,470 & 170,823 & 15.2 & 709 & 56 \\ \hline
      \texttt{Brightkite(\textbf{B.})}  &  58,228 & 194,090 & 6.7 & 1098 & 52 \\ \hline
      \texttt{Github(\textbf{H.})} &  37,700 & 289,003 & 15.3 & 9458 & 34 \\ \hline
      \texttt{Gowalla(\textbf{G.})}  & 196,591 & 456,830 & 9.2 & 10721 & 51 \\ \hline
      \texttt{NotreDame(\textbf{N.})}  & 325,729 & 1,497,134 & 6.5 & 3812 & 155 \\ \hline
      \texttt{Stanford(\textbf{S.})}  & 281,903 & 2,312,497 & 16.4 & 38626 & 71 \\ \hline
      \texttt{Youtube(\textbf{Y.})}  & 1,134,890 & 2,987,624 & 5.3 & 28754 & 51 \\ \hline
      \texttt{DBLP(\textbf{D.})} &  1,566,919 & 6,461,300 & 8.3 & 2023 & 118 \\ \hline
    \end{tabular}
\label{tb:datasets}
\end{table}

\begin{table}
\small
  \centering
  \caption{Percentage of Valid Collapsers \& Anchors}
    \begin{tabular}{|l|l|l|l|l|l|l|l|l|}
      \hline
      \textbf{Dataset} & \texttt{F.} & \texttt{B.} & \texttt{H.} & \texttt{G.} & \texttt{N.} & \texttt{S.} & \texttt{Y.} & \texttt{D.} \\ \hline
      Collapsers & 61\% & 44\% & 49\% & 50\% & 24\% & 38\% & 28\% & 69\% \\ \hline
      Anchors & 69\% & 70\% & 78\% & 74\% & 50\% & 33\% & 64\% & 54\% \\ \hline
    \end{tabular}
\label{tb:valid_percentage}
\end{table}

\begin{figure}[t]
\begin{center}
\includegraphics[width=0.7\columnwidth]{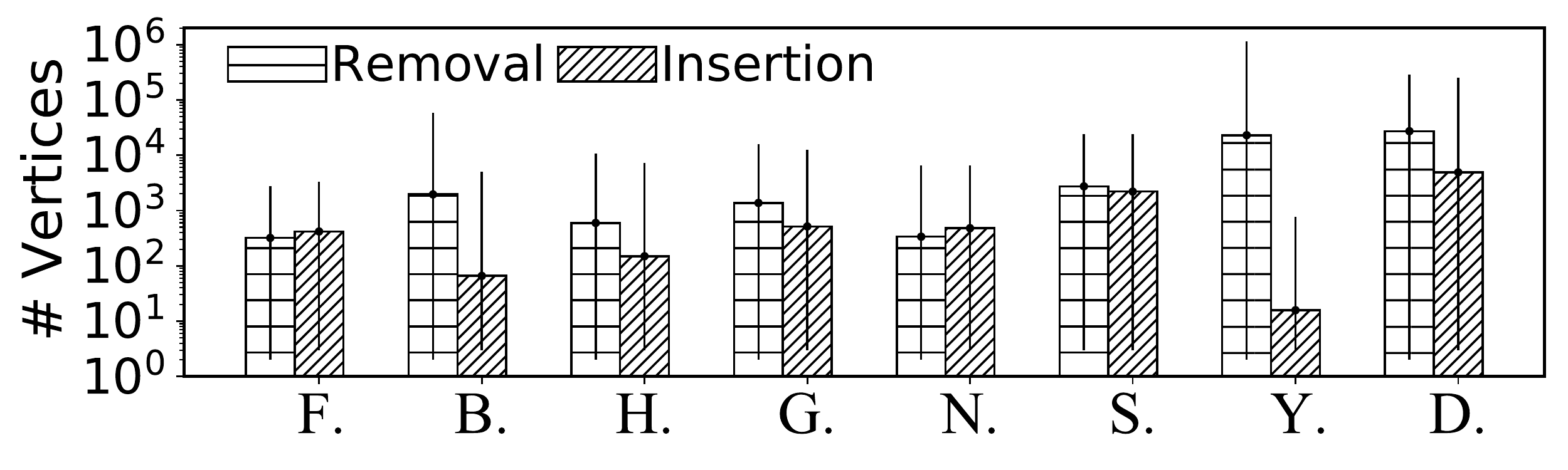}
\end{center}
\caption{\# Updated Vertices w.r.t. Edge Streaming}
\label{fig:updated_vertices}
\end{figure}

\begin{figure*}[htp]
\begin{center}
    \subfigure[Facebook]{
    \includegraphics[width=0.24\columnwidth]{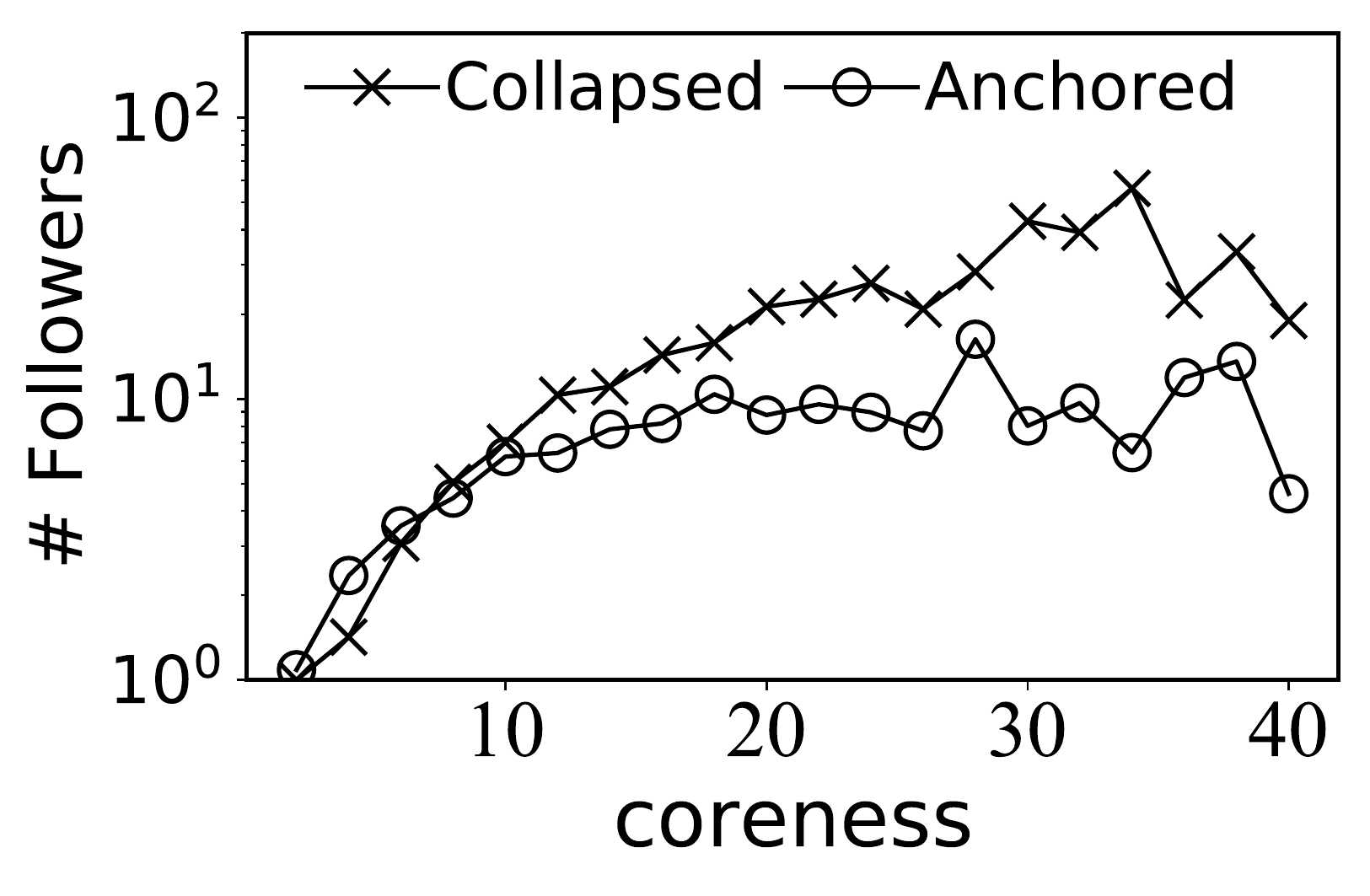}}
    \subfigure[Brightkite]{
    \includegraphics[width=0.24\columnwidth]{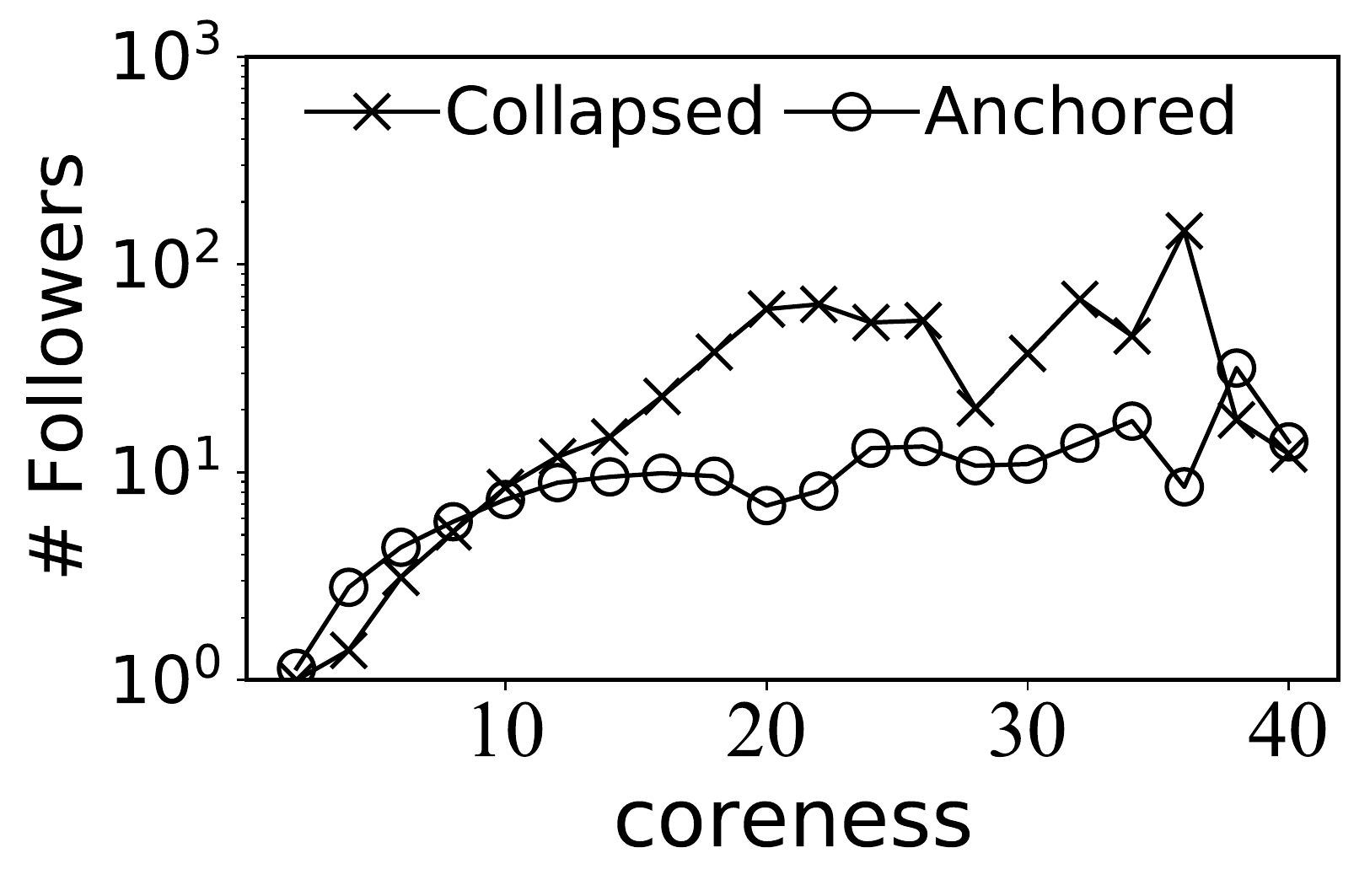}}
    \subfigure[Github]{
    \includegraphics[width=0.24\columnwidth]{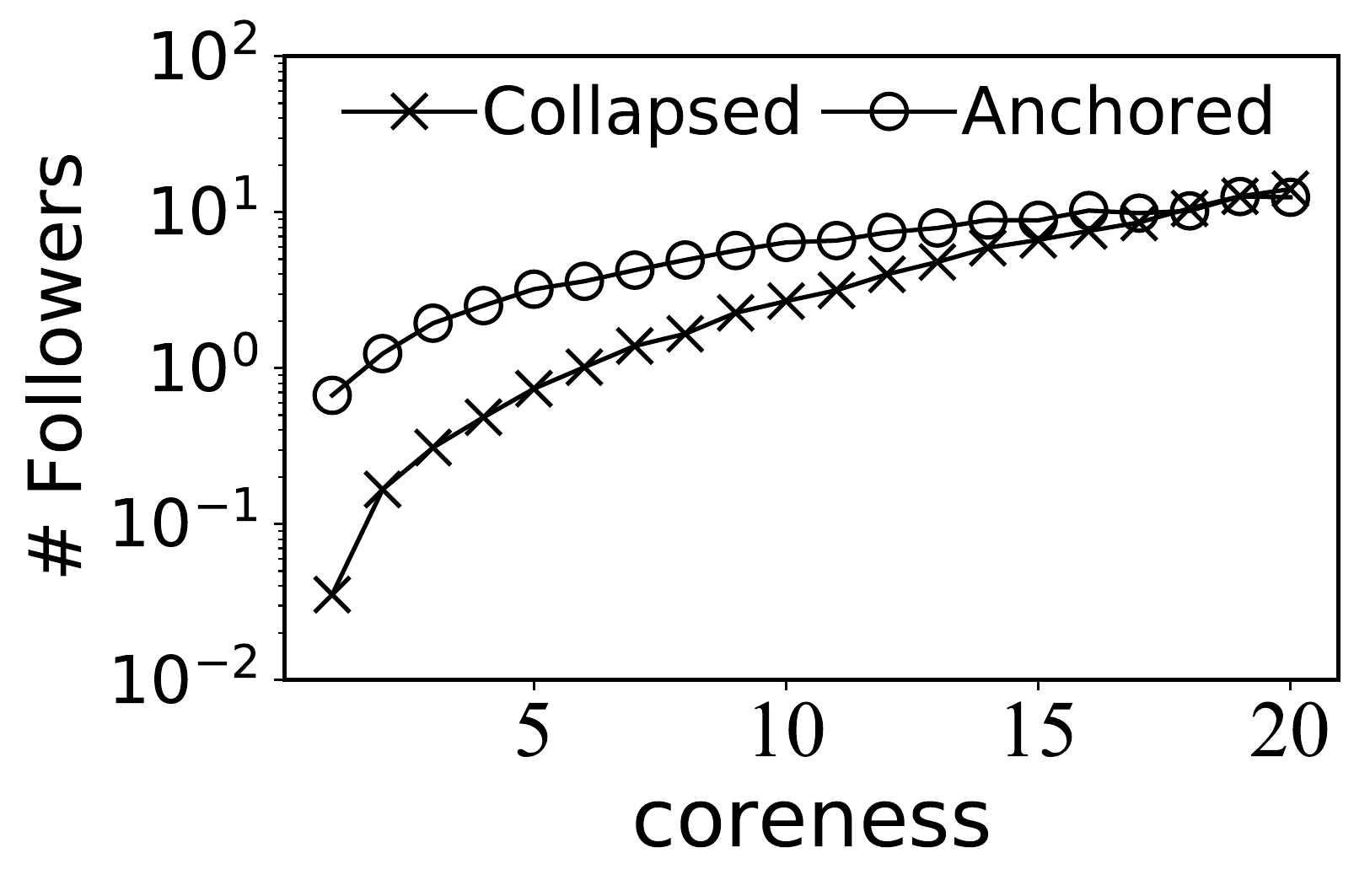}}
    \subfigure[Gowalla]{
    \includegraphics[width=0.24\columnwidth]{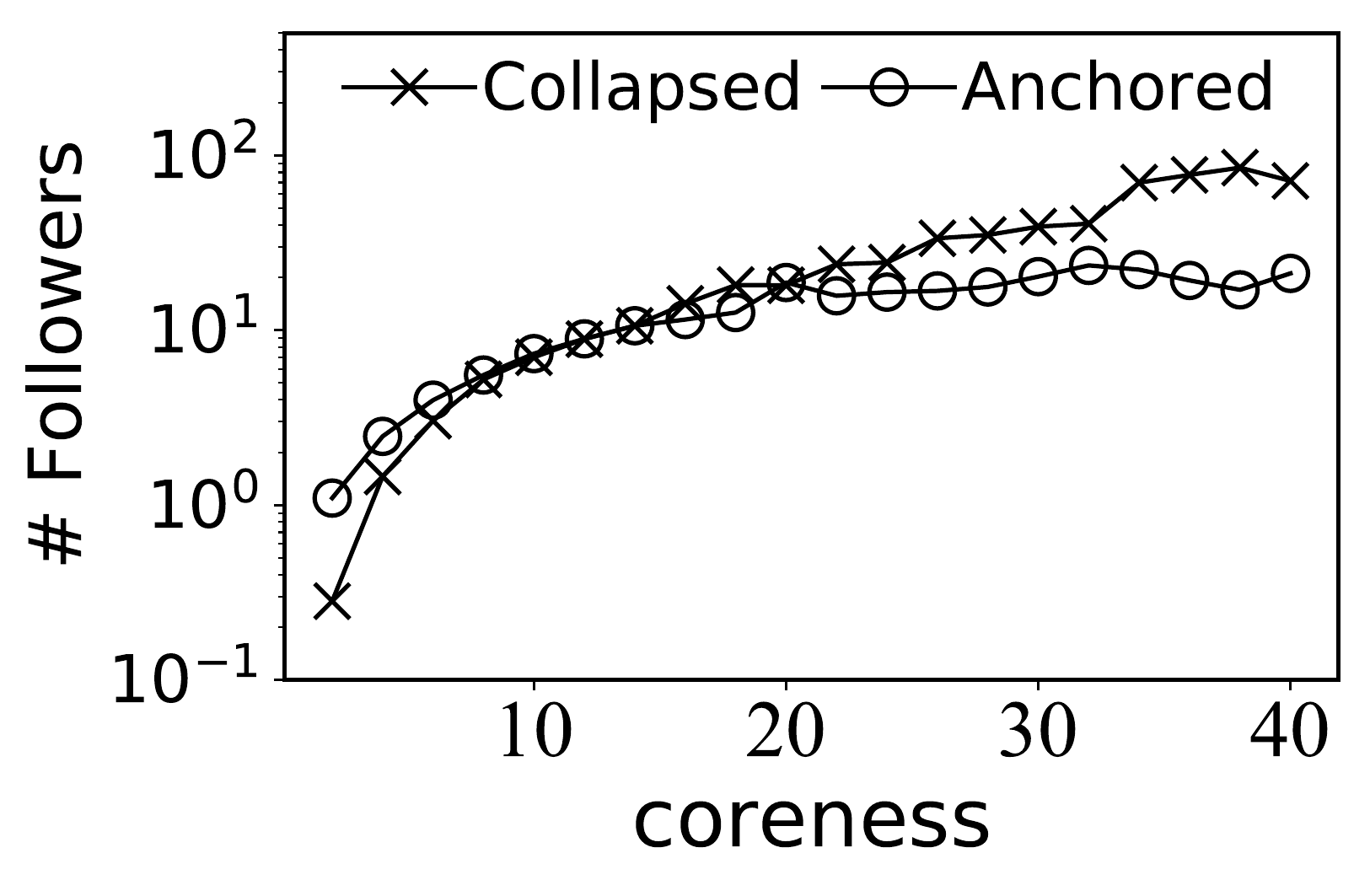}}
    \vspace{-2mm}
    \subfigure[NotreDame]{
    \includegraphics[width=0.24\columnwidth]{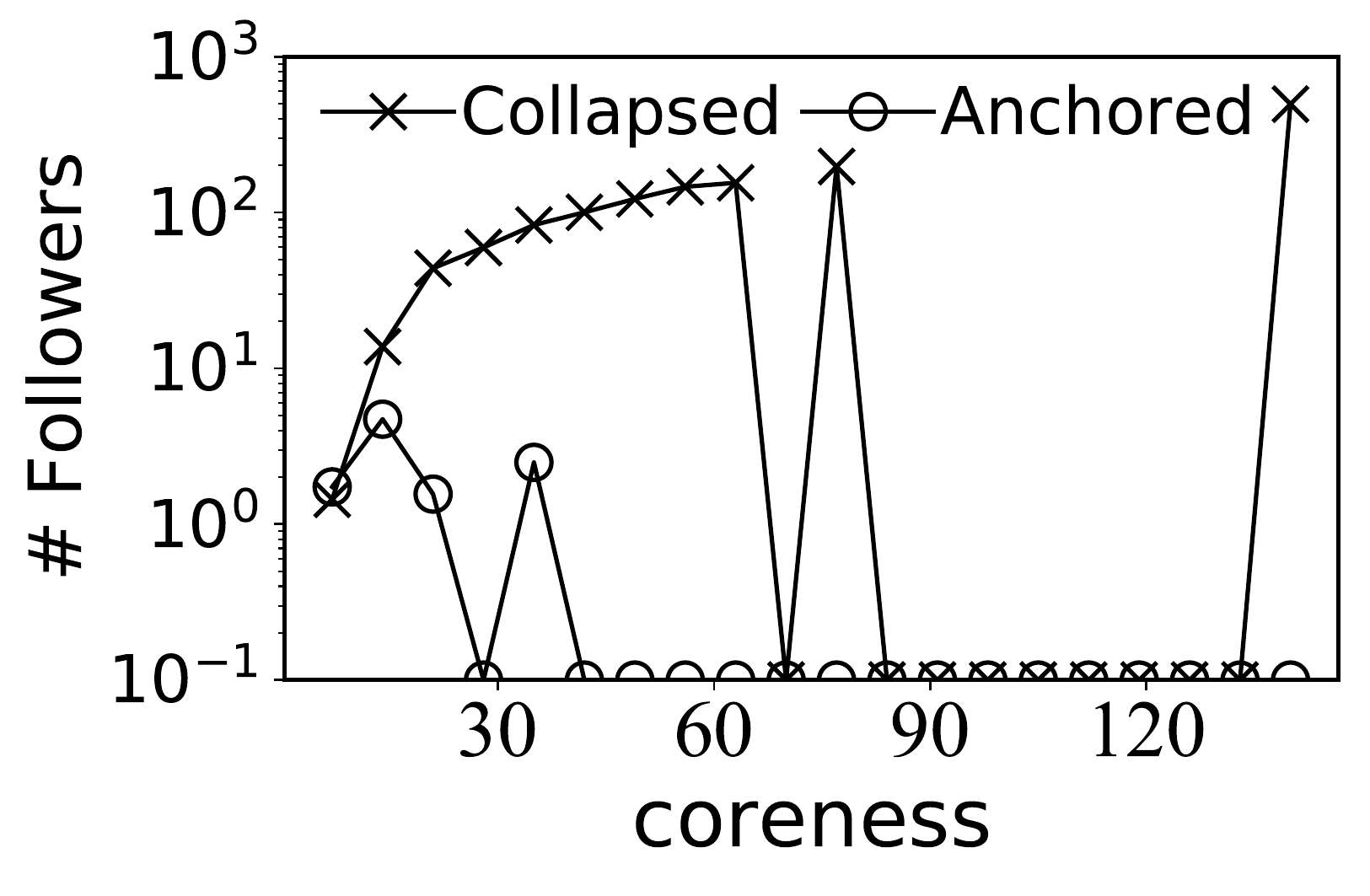}}
    \subfigure[Stanford]{
    \includegraphics[width=0.24\columnwidth]{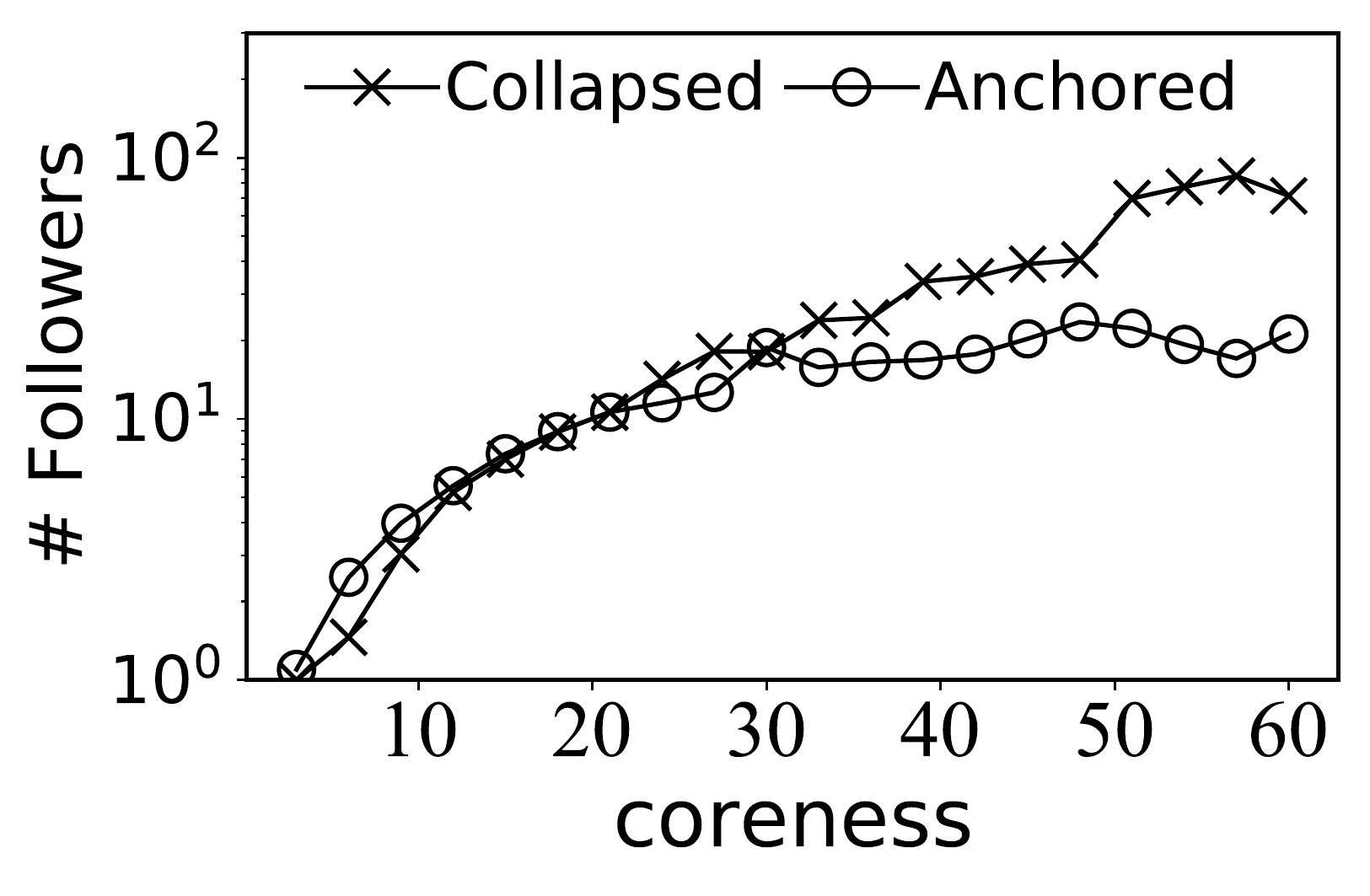}}
    \subfigure[Youtube]{
    \includegraphics[width=0.24\columnwidth]{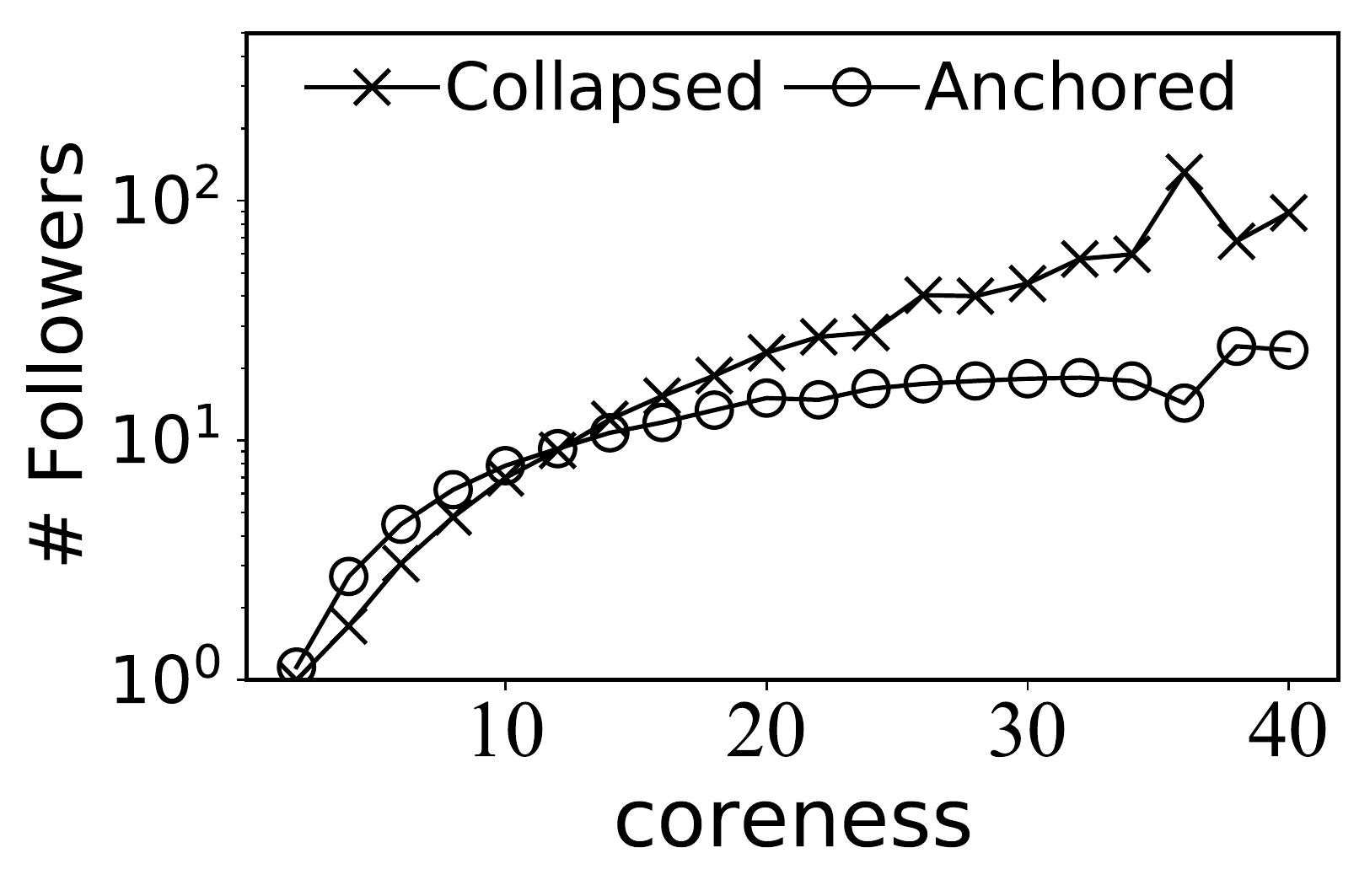}}
    \subfigure[DBLP]{
    \includegraphics[width=0.24\columnwidth]{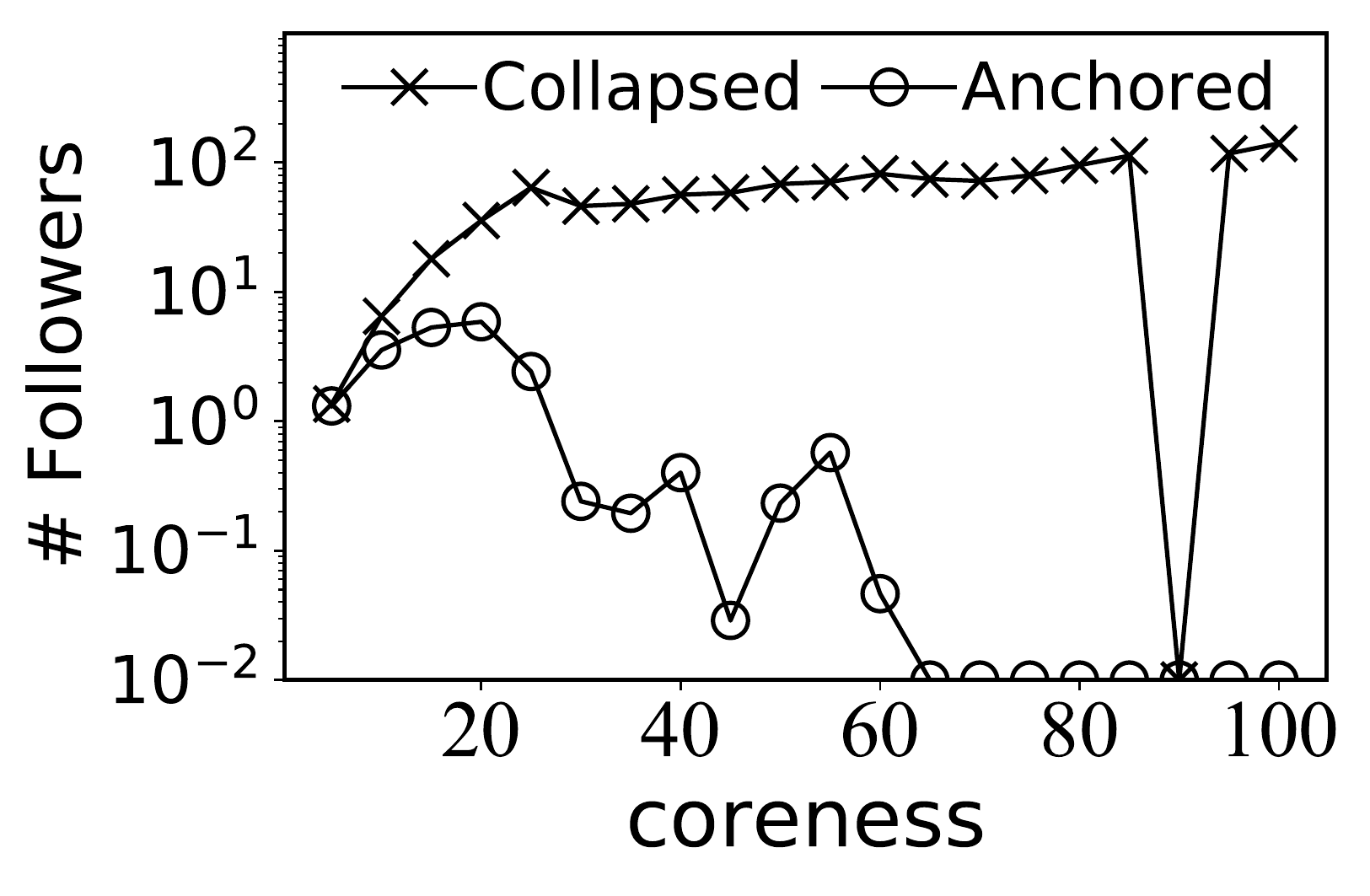}}
\end{center}
\caption{\# Followers Distribution on Vertex Coreness}
\label{fig:coreness_distr}
\end{figure*}

\begin{figure*}[htp]
\begin{center}
    \subfigure[Facebook]{
    \includegraphics[width=0.24\columnwidth]{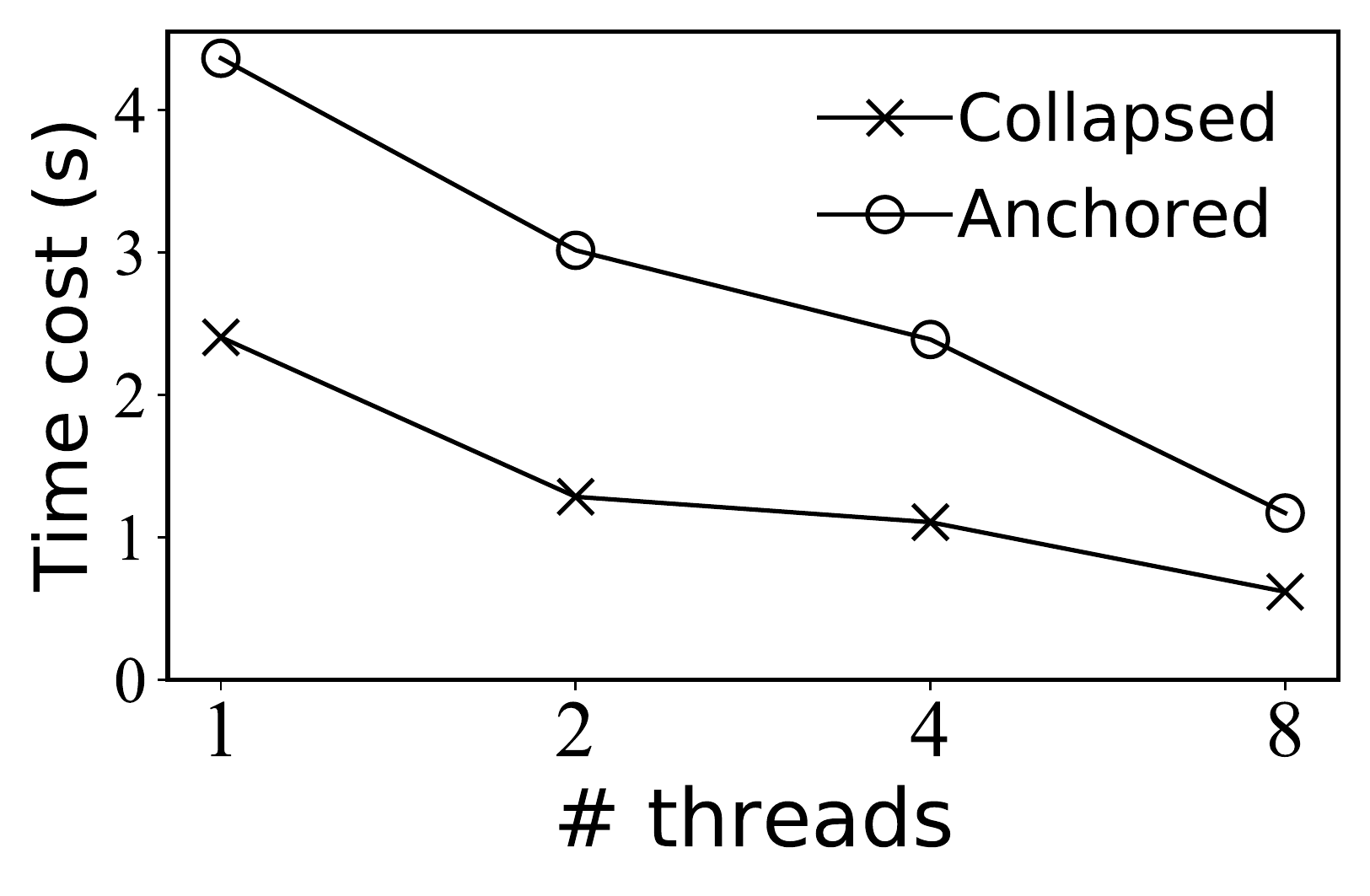}}
    \subfigure[Brightkite]{
    \includegraphics[width=0.24\columnwidth]{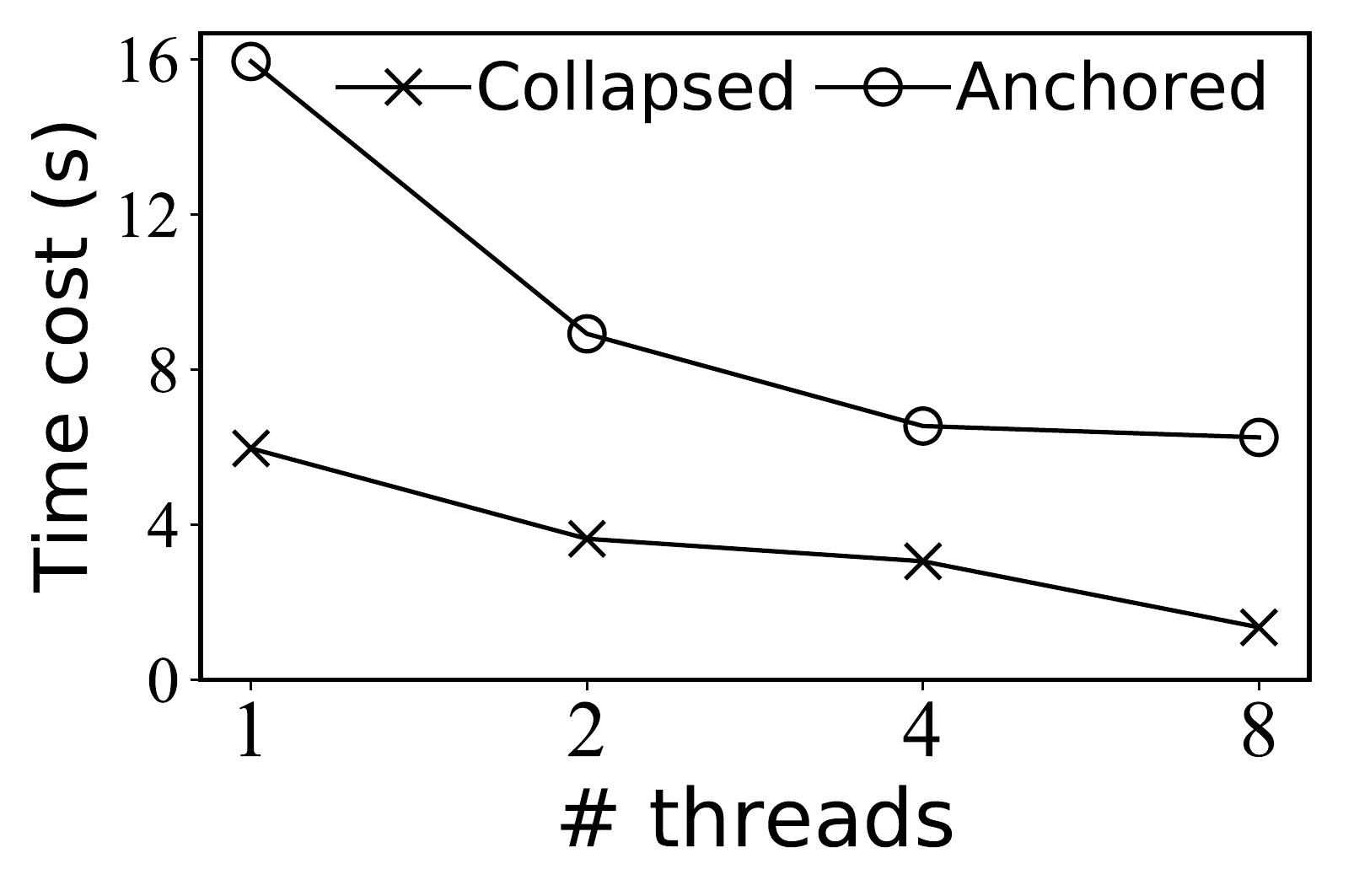}}
    \subfigure[Github]{
    \includegraphics[width=0.24\columnwidth]{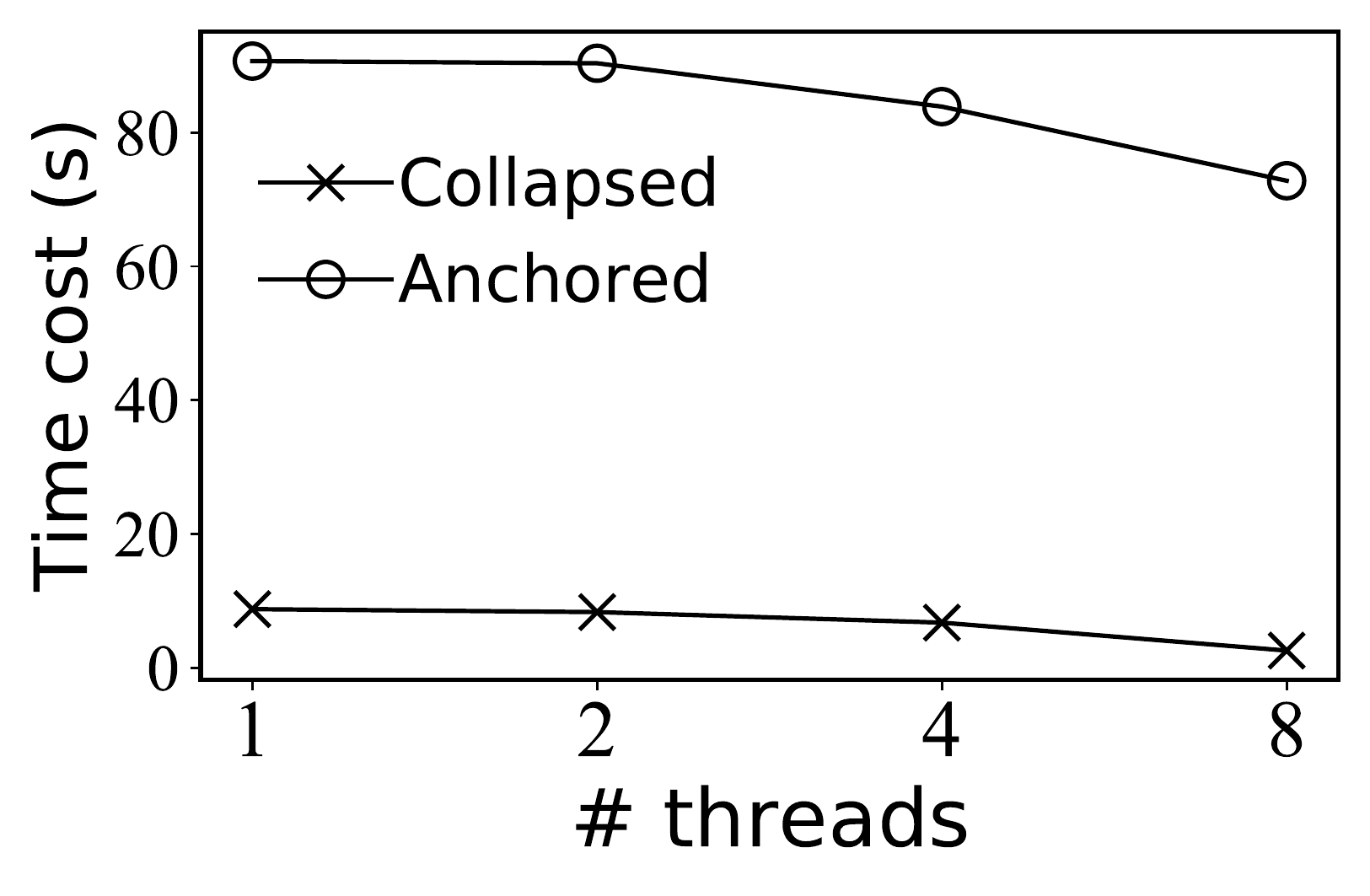}}
    \subfigure[Gowalla]{
    \includegraphics[width=0.24\columnwidth]{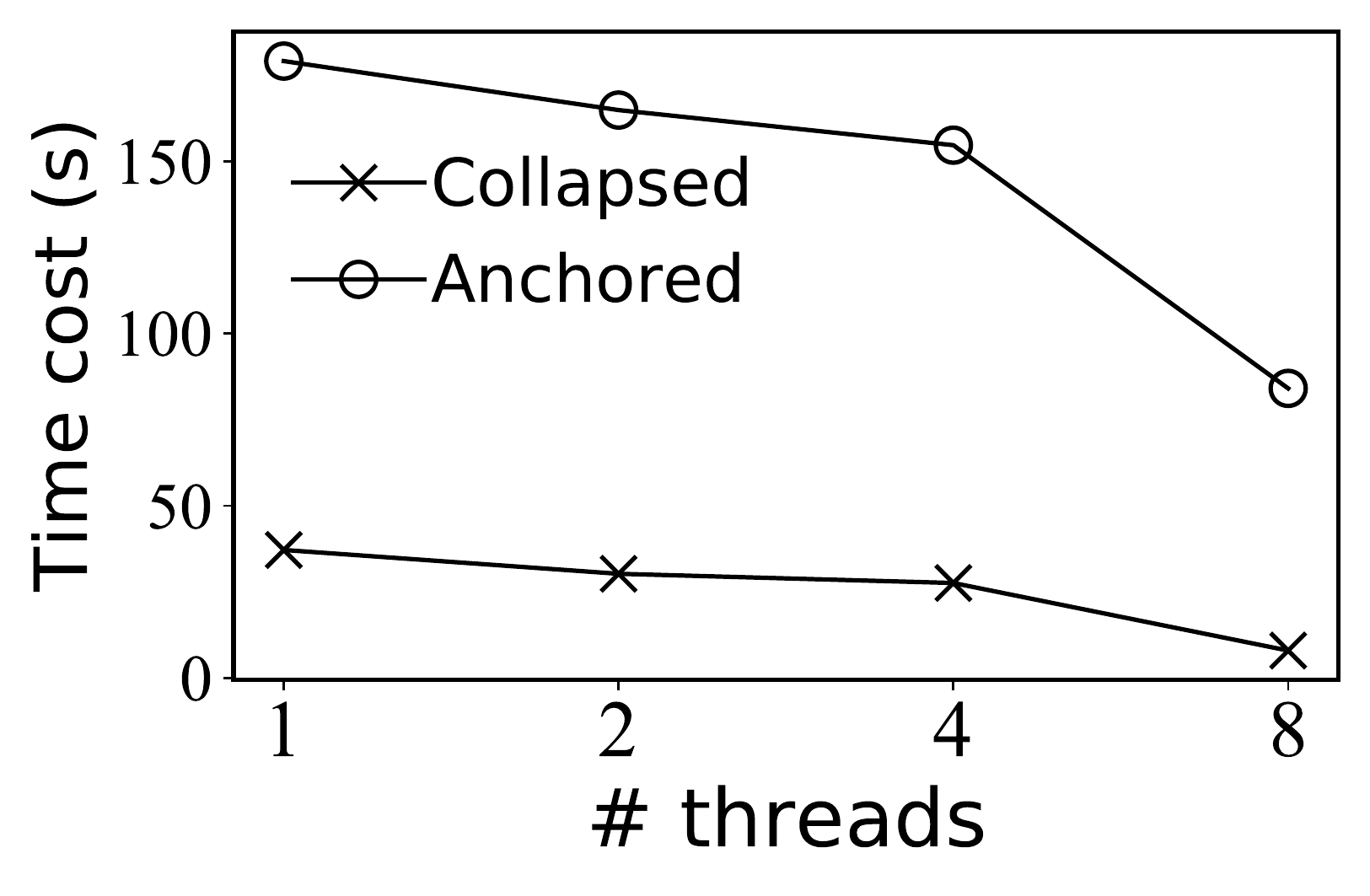}}
    \vspace{-2mm}
    \subfigure[NotreDame]{
    \includegraphics[width=0.24\columnwidth]{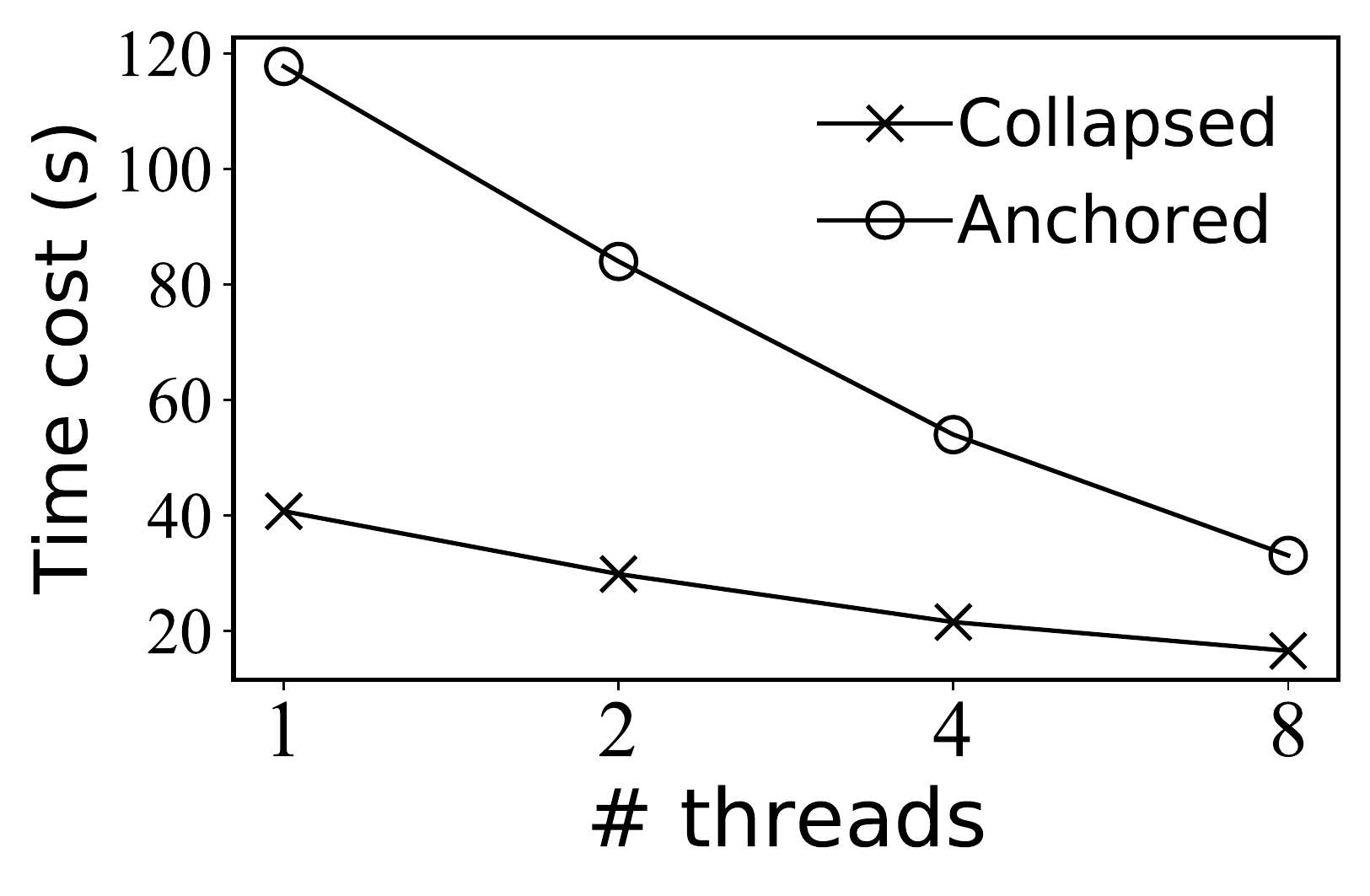}}
    \subfigure[Stanford]{
    \includegraphics[width=0.24\columnwidth]{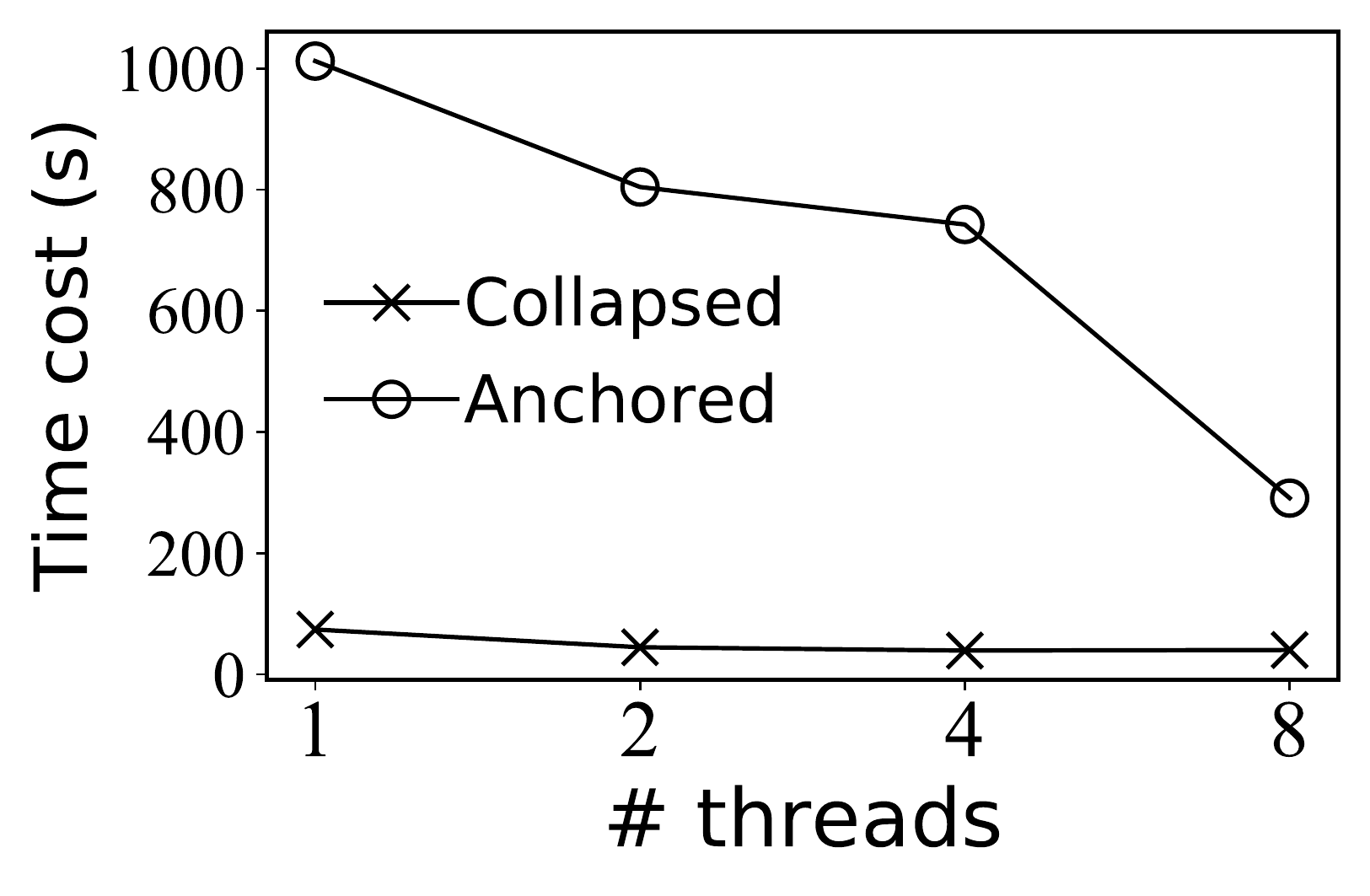}}
    \subfigure[Youtube]{
    \includegraphics[width=0.24\columnwidth]{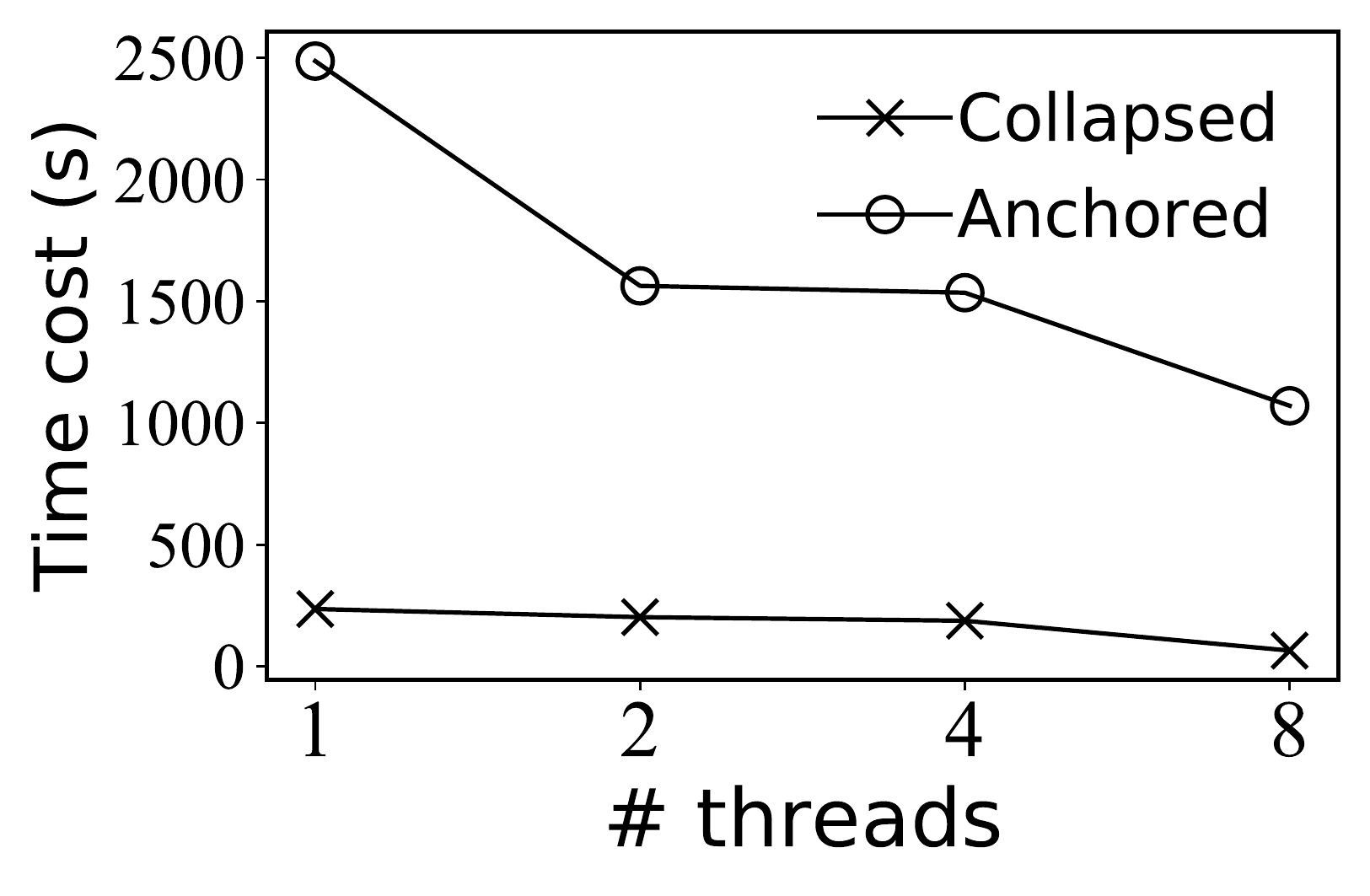}}
    \subfigure[DBLP]{
    \includegraphics[width=0.24\columnwidth]{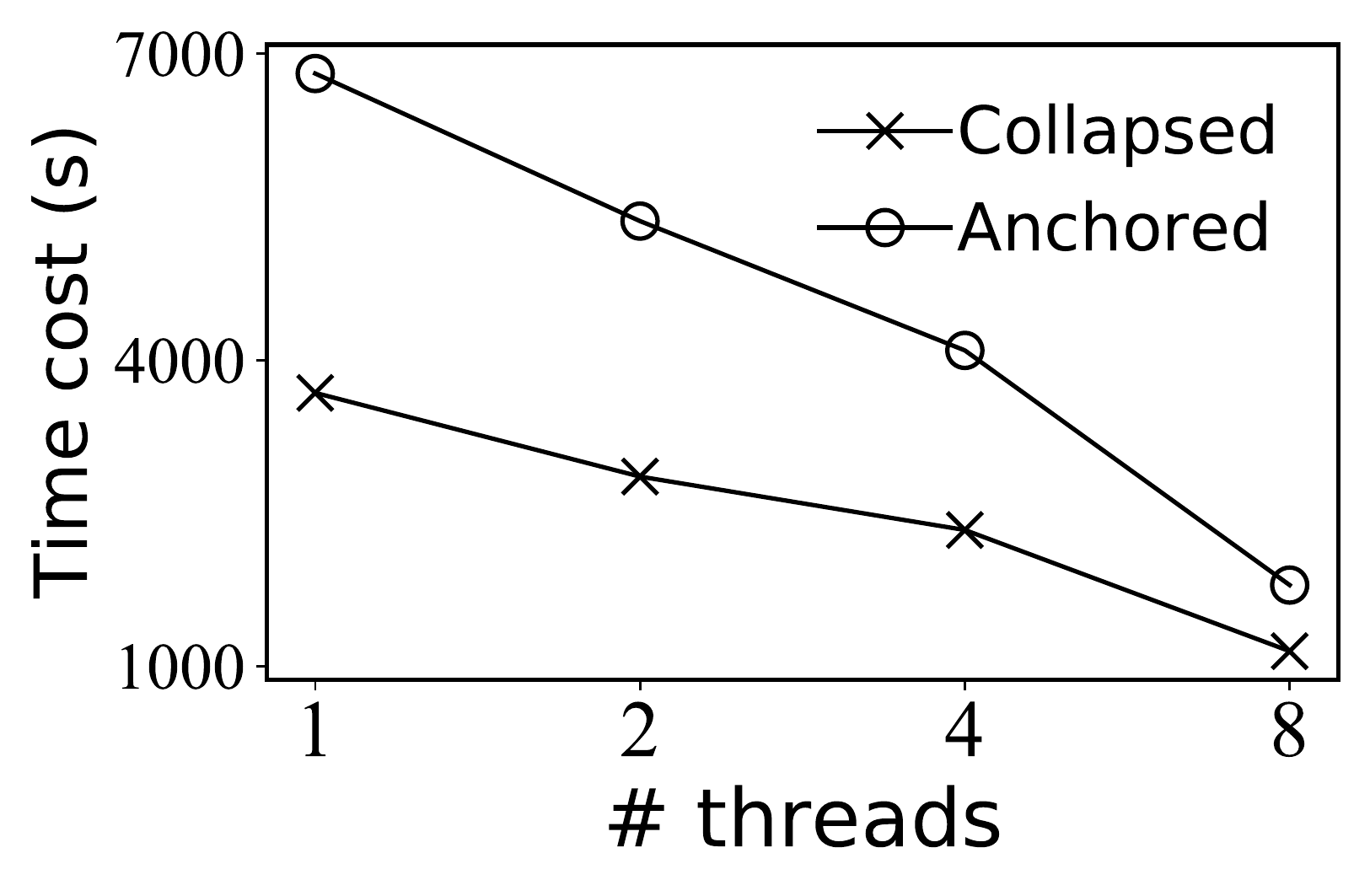}}
\end{center}
\caption{Computation Efficiency and Scalability}
\label{fig:efficiency_scalability}
\end{figure*}

\vspace{1mm}
\noindent \textbf{Datasets.} We use 8 real-life datasets for experiments. \texttt{Facebook}, \texttt{Brightkite}, \texttt{Github}, \texttt{Gowalla} \& \texttt{Youtube} are from \cite{snap}. \texttt{NotreDame}, \texttt{Stanford} and \texttt{DBLP} are from \cite{konect}. Due to the Space limitation, we abbreviate each dataset's name as a unique bold capital letter when necessary as in Table~\ref{tb:datasets}. Table~\ref{tb:datasets} also shows the statistics of the datasets, listed in increasing order of edge numbers.

\vspace{1mm}
\noindent \textbf{Parameters.} All the programs are implemented in C++ and compiled with G++ on Linux. The server has 3.4GHz Intel Xeon CPU with 4 cores (8 threads available) and Redhat system. We adopt \texttt{OpenMP} to utilize the multithreads of the machine.

\subsection{Effectiveness}

\vspace{1mm}
\noindent \textbf{Percentage of Valid  Collapsers \& Anchors.}
In Table~\ref{tb:valid_percentage}, for each dataset, we present the percentage of vertices which have non-empty collapsed follower set (resp. anchored follower set), which we call the valid collapsers (resp. valid anchors). Firstly, we can find the percentages are higher than 50\% on most datasets for both the valid collapsers and valid anchors. This means both the collapsing and anchoring behaviors of vertices indeed have apparent effects on the network, which demonstrates the necessity of monitoring each vertex's collapsed and anchored follower set. Secondly, we find on some datasets such as \texttt{Stanford} and \texttt{DBLP}, the number of valid collapsers are more than valid anchors. However, on all the other datasets, the number of valid anchors are more than valid collapsers. This means vertex collapsing and vertex anchoring have different influence on different networks due to the different network structures, thus it is necessary to monitor each vertex's influence via both the collapsed and anchored follower set.

\vspace{1mm}
\noindent \textbf{Followers Distribution on Vertex Coreness.}
In Figure~\ref{fig:coreness_distr}, we present the distribution of the number of followers (collapsed and anchored) on vertex's coreness value. For each dataset, $k_{max}$ denotes the largest coreness value from all the vertices, which is shown in the statistics of Table~\ref{tb:datasets}. We divide the coreness range $[1, k_{max}]$ into 20 equal integer-width intervals, with the last remained interval combined with the second last interval. For each coreness value $x_i$ on the horizontal ordinate of Figure~\ref{fig:coreness_distr}, we compute the mean number of collapsed followers (resp. anchored followers) of all the vertices having their own coreness value within $(x_{i-1}, x_i]$. Firstly, we can find on most coreness values on most datasets, the mean number of collapsed followers are more than anchored followers. However, remember that in Table~\ref{tb:valid_percentage}, the number of valid anchors are more than valid collapsers on the contrary. This means the collapsed followers and anchored followers reflect two different views of node influence on network structural stability and it is necessary to monitor both of them. Secondly, we find that, except when the coreness value is little, the number of collapsed and anchored followers are not always positive-correlated with the coreness value. This demonstrates that, we cannot simply decide a node's influence on network structural stability simply based on its own engagement, and it is essential to actually compute the collapsed and anchored followers.

\vspace{1mm}
\noindent \textbf{Amount of Updated Vertices w.r.t. Edge Streaming.}
On each dataset, we randomly remove 100 edges and insert 100 edges. Each time an edge is removed or inserted, we record the number of vertices which have updates in either the collpased or anchored follower set. The result is shown in Figure~\ref{fig:updated_vertices}. The main boxes show the mean amount of updated vertices. We can find that, either removing or inserting a single edge can cause the update amount of other vertices from $10^1$ to $10^4$ on average. We also add the error-bar on each box to show the minimum and maximum amount of updated vertices among the 100 edge removal and insertion. We can find that the minimum amount of updated vertices are generally close to 1 and the maximum amount can reach up to around $10^5$ in extreme cases (\texttt{Brightkite} and \texttt{Youtube}). The significant amount of vertices in need of updating the follower sets demonstrates the necessity of our maintaining technique regarding edge streaming.

\vspace{1mm}
\subsection{Efficiency}

\vspace{1mm}
\noindent \textbf{Offline Computation Efficiency and Scalability.}
We vary the number of threads from 1, 2, 4 to 8 and test the efficiency and scalability of our offline algorithm of computing the collapsed and anchored follower set of each vertex. The time cost is presented in Figure~\ref{fig:efficiency_scalability}. We can find that, on all the datasets, the time cost tends to be less with increasing the number of threads, for the computation of both collapsed and anchored followers. This means our parallel algorithm based on separating the computation into each shell component helps us take the advantage of parallel architechure.

\vspace{1mm}
\noindent \textbf{Online Maintenance Efficieny.}
We test the efficiency of our online maintenance algorithm. In all the datasets, we randomly sample 100 edges for removal and insertion respectively. Each time an edge is removed (resp. inserted), we record the time cost of maintaining both the collapsed and anchored follower set for all the vertices. In Figure~\ref{fig:maintenance_time}, the main boxes compare the mean time cost of the 100 maintenance to the offline computation time, for edge removal (a) and edge insertion (b), respectively. We can find that, the maintenance time is 2 to 4 orders of magnitude faster than the offline computation, and the maintenance of edge insertion is generally faster than the maintenance of edge removal. We also add the error-bars to show the minimum and maximum maintenance time among the 100 edge removals and 100 edge insertions. We can find that, in the worst cases, the maintenance time cost is still less than the time cost of offline computation. And in the optimal cases, the maintenance time cost is constantly little regardless of the scale of datasets. Overall, the experiments presented in Figure~\ref{fig:maintenance_time} demonstrate the efficiency of our maintaining technique for evolving networks.

\begin{figure}[htp]
\begin{center}
    \subfigure[Edge Removal]{
    \includegraphics[width=0.7\columnwidth]{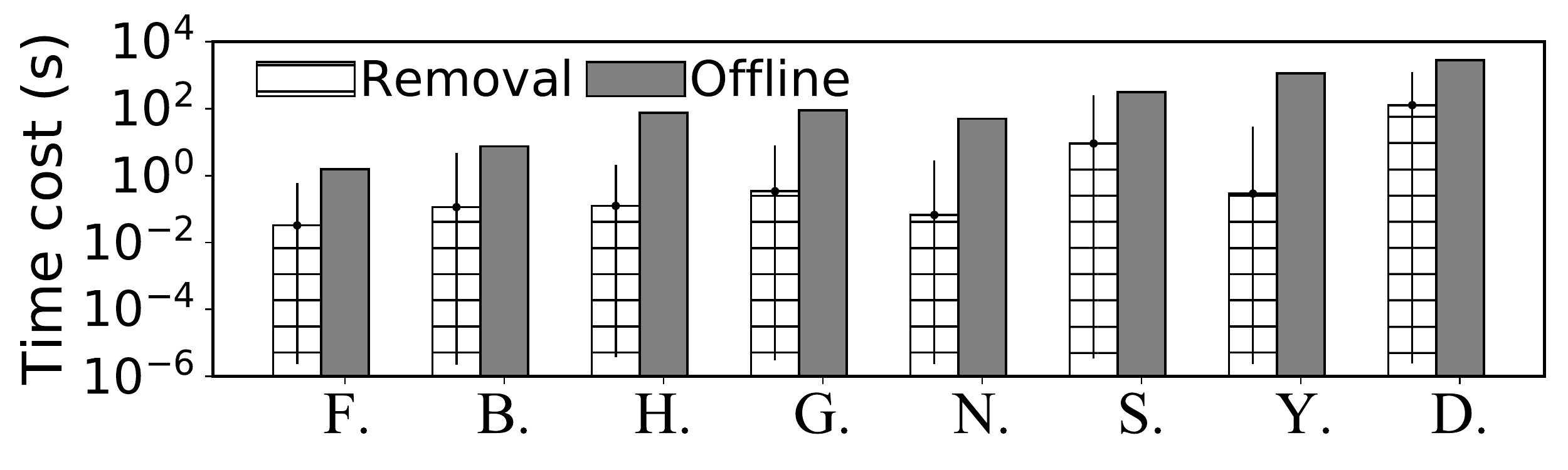}}
    \subfigure[Edge Insertion]{
    \includegraphics[width=0.7\columnwidth]{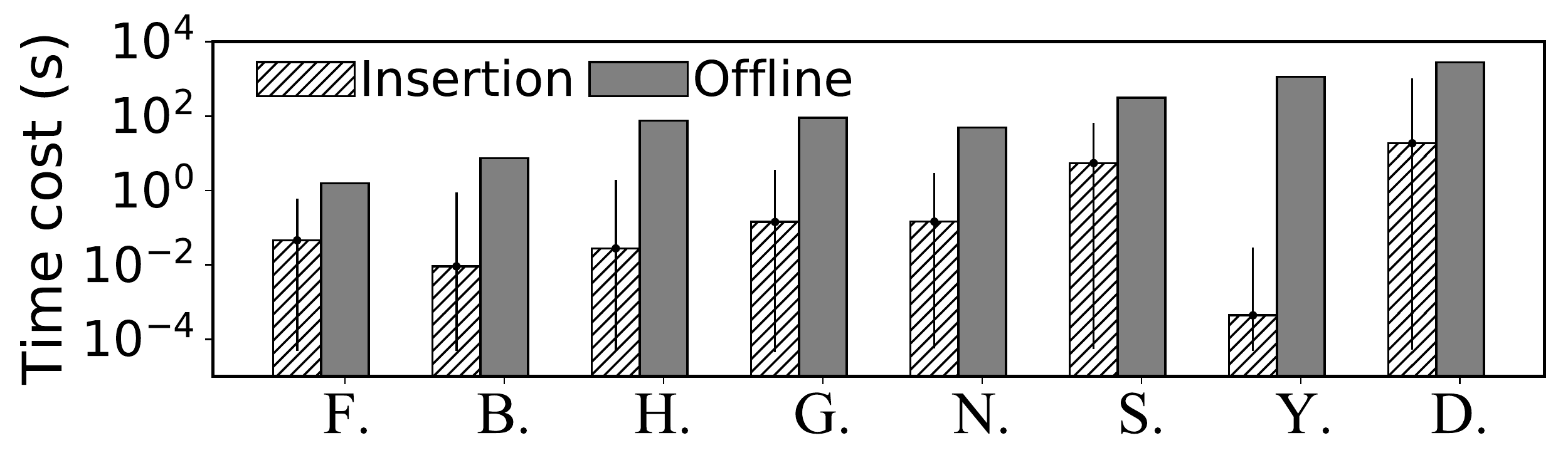}}
\end{center}
\caption{Maintenance Time w.r.t. Edge Streaming}
\label{fig:maintenance_time}
\end{figure}

\vspace{2mm}
\section{Conclusion}
In this paper, we innovatively propose a model that estimates a node's influence on the engagement dynamics of other nodes in a social network, by two views: the collapsed power and anchored power. Then a parallel algorithm is proposed to compute each node's collapsed and anchored followers offline, equipped with online maintenance technique to update the nodes' followers when the network is evolving. Our effectiveness experiments on real-life datasets demonstrate the necessity of both the collapsed power and anchored power. Our efficiency experiments show that both our offline and online algorithms are efficient.

\vspace{2mm}
\bibliographystyle{abbrv}
\bibliography{reference}

\end{document}